\theoremstyle{plain}
\numberwithin{equation}{section}
\newtheorem{theorem}{Theorem}
\newtheorem{lemma}{Lemma}
\newtheorem{proposition}{Proposition}
\newtheorem{definition}{Definition}
\begin{document}

\title{\textbf{Quadratic  control of quantum processes}}

\author{Luigi Accardi}
\address{Centro Vito Volterra, Universit\'{a} di Roma TorVergata\\
            Via di TorVergata, 00133 Roma, Italy}
\email{volterra@volterra.mat.uniroma2.it}
\author{Andreas Boukas}
\address{Department of Mathematics, American College of Greece\\
 Aghia Paraskevi, Athens 15342, Greece}
\email{gxk-personnel@ath.forthnet.gr}
\thanks{ The second author wishes to thank Professor Luigi Accardi  for his support and for the hospitality of the Centro Vito Volterra of the Universita di Roma TorVergata on several occasions.}

\subjclass{Primary 81S25, 81P15 ; Secondary 93E20, 49N10.}

\date{}

\begin{abstract}
Within the framework of the Accardi-Fagnola-Quaegebeur (AFQ) representation free calculus of \cite{b}, we consider the problem of controlling the size of a quantum stochastic flow generated by a unitary stochastic evolution affected by quantum noise. In the case when the evolution is driven by first order white noise (which includes quantum Brownian motion) the control is shown to be given in terms of the solution of an algebraic Riccati equation.
 This is done by first solving the problem of controlling (by minimizing an associated quadratic performance criterion) a stochastic process whose evolution is described by a stochastic differential equation of the type considerd in \cite{b}. The solution is given as a feedback control law in terms of the solution of a stochastic Riccati equation.  \end{abstract}

\maketitle

\section{\textbf{Introduction}}

In classical quantum mechanics the time-evolution of an observable $X$ (i.e a self-adjoint operator on the wave function Hilbert space) is described at each time $t \geq 0$  by a new observable $j_t(X)=U^*(t)\,X\,U(t)$ where $U(t)=e^{-itH}$ and $H$ is a self-adjoint operator on the wave function space. In this case the unitary process $U=\{U(t)\,/\,t\geq0\}$ satisfies the differential equation

\begin{eqnarray*}
dU(t)=-iH\,U(t)\,dt,\,\,U(0)=I
\end{eqnarray*}

and the dynamics of the quantum flow $j_t(X)$ is described by the Heisenberg equation

\begin{eqnarray*}
dj_t(X)=i[H,j_t(X)]\,dt,\,\,j_0(X)=X
\end{eqnarray*}

where

\begin{eqnarray*}
[H,j_t(X)]=H\,j_t(X)-j_t(X)\,H
\end{eqnarray*}

   In the general case when the  system is affected by quantum noise, the equation satisfied by the unitary process $U$ is  a quantum stochastic differential equation  driven by that noise (see e.g \cite{o, l1}) and the corresponding equation for the quantum flow  $j_t(X)$  is interpreted as the  Heisenberg picture of the Schr\"{o}dinger equation in the presence of noise or as a quantum probabilistic analogue of the  Langevin equation. The problem of determining conditions on the coefficients of such equations  that guarantee the unitarity of $U$ has been studied extensively for various quantum noises.

In this paper we consider the problem of controlling the size of  a  quantum flow by minimizing  the  $L^2$-type performance functional $J_{\xi,T}(\cdot)$ defined in  the following

\begin{definition}
Let $T$ be a finite time, $\xi$  a vector in a dense subset of the space on which the flow lives, $M$  a positive operator on the same space, and \break $u=\{u_t\,/\,t \geq 0 \}$  a control process appearing additively in the deterministic part of the quantum stochastic differential equation satisfied by the unitary process  \break $U=\{U(t)\,/\,t\geq0\}$ (see Section 5 below). Define

\begin{eqnarray}
J_{\xi,T}(u)=\int_0^T\,[\|j_t(X)\xi\|^2+\|u_t \xi\|^2\, ]\,dt+<\xi,j_T(M)\xi>
\end{eqnarray}

where $\langle \cdot , \cdot \rangle $ is the inner product in $H$ and $\| \cdot \|$ is the corresponding norm.

\end{definition}

If we wish to allow for flows that do not necessarily correspond to classical quantum mechanical observables,  by allowing for non self-adjoint $X$, we replace $\|j_t(X)\xi\|^2$ in (1.1) by 
$<\xi,j_t(X^*X)\xi>$.

  We interpret the first term on the right hand side of (1.1) as a measure of the size of the flow over $[0,T]$, the second as a measure of the control effort over $[0,T]$ and the third as a "penalty" for allowing the evolution to go on for a long time.

  The  functional  $J_{\xi,T}(\cdot)$ of Definition 1 , which we propose as suitable for the evaluation of the performance of a quantum flow,  will be shown in Section 5 to be derived from a quantum analogue of the classical quadratic performance criterion for operator processes $X=\{X(t)\,/\,t \geq 0\}$,  described in Section 3 as solutions of Hilbert space stochastic differential equations, in the case when $X(t)=U(t)$ is a unitary operator for each $t \geq 0$. The operators $X(t)$ are not necessarily self-adjoint so they do not in general correspond to quantum mechanical observables in the classical sense unless special assumptions are made on the coefficients of the defining quantum stochastic differential equations. However, this is not a problem since the conrol problem for the $X(t)\,'s$ is to be used as a passage to the solution of the control problem for the corresponding quantum flow  $j_t(\cdot )$ which does consist of quantum mechanical observables. For more on quantum stochastic flows see \cite{d,d1}.

In quantum probability, starting with an operator representation of a Lie algebra, operator analogues and generalizations of classical stochastic noise processes (such as Brownian motion, the exponential process, the Poisson process e.t.c.) as well as purely quantum noises such as the square of white noise ( see \cite{c}) can be constructed  ( see e.g \cite{i,n,o}). 

The quantum stochastic calculi constructed in order to study evolutions driven by these operator noises were dependent on the particular representation and led to analytic difficulties such as the unboundedness of  solutions of stochastic differential equations, the non-invariance of their domain e.t.c. These problems were removed by the introduction of the AFQ representation free calculus of \cite{b} which provided the analytic and topological framework for a unified treatment of  quantum noises and which is now in standard use. In the following section  we provide  a brief review of the AFQ calculus for quick reference.

\medskip
 \section{\textbf{Review of the representation free calculus}}
\medskip

Let $H$ be a complex separable Hilbert space, $B(H)$ the algebra of all bounded linear operators on $H$, $D$ a total subset of $H$, $(A_{t]})_{t \geq 0}$ an increasing family of $W^*$-algebras of operators on $H$, $A$ a $W^*$-algebra of operators on $H$ such that $A_{t]} \subseteq A$ for all $t \geq 0$, $A_{t]}^{\prime}$ the commutant of  $A_{t]}$ in $B(H)$, $H_{t]}(\xi)$ the closure for each $\xi \in D$ of the subspace $[A_{t]}]=\{\alpha \xi \,/\, \alpha \in A_{t]}\}$, $e_{t]}^{\xi}$ the orthogonal projection onto $H_{t]}(\xi)$, $L(D;H)$ the vector space of all linear operators $F$ with domain containing $D$ and such that the domain of the adjoint operator $F^*$ also includes $D$.

A \textit{random variable} is an element of  $L(D;H)$. A \textit{stochastic process} in $H$ is a family $F=\{F_t\,/\,t \geq 0\}$ of random variables such that for each $\eta \in D$ the map $t \rightarrow F_t \eta$ is Borel measurable. If  $F_t \in B(H)$ for each $t \geq 0$, and  $\sup_{0 \leq t \leq T} \|F_t\| < + \infty$ for each $T < + \infty$, then the process $F$ is called \textit{locally bounded}. If  $F_t \geq 0$ for each $t \geq 0$ then $F$ is \textit{positive}.

Let $D_t^{\prime}$ denote the linear span of $A_{t]}^{\prime}D$. An operator $F$ is \textit{t-adapted} to $A_{t]}$ if $dom(F)=D_t^{\prime} \subseteq dom(F^*)$ and, for all $\alpha_t^{\prime} \in  A_{t]}^{\prime}$ and $\xi \in D$,  $F\alpha_t^{\prime}\xi=\alpha_t^{\prime}F\xi$ and $F^*\alpha_t^{\prime}\xi=\alpha_t^{\prime}F^*\xi$. Strong limits of sequences of $t$-adapted operators are  $t$-adapted. A stochastic process $F$ is \textit{adapted to the filtration} $\{A_{t]}\,/\,t \geq 0\}$ if $F_t$ is adapted for all $t \geq 0$, and it is \textit{simple} if $F_t=\sum_{k=1}^n \chi_{[t_k,t_{k+1})}(t) F_{t_k}$ for some finite integer $n$ and $0 \leq t_0 < t_1 <...<t_{n+1} <+ \infty$.

An \textit{additive process} is a family $M=\{M(s,t)\,/\,0 \leq s \leq t\}$ of random variables such that for all $s \leq t$ the operator $M(s,t)$ is $t$-adapted and, for all $r, s, t$ with $r \leq s \leq t$, $M(r,t)=M(r,s)+M(s,t)$ and $M(t,t)=0$ on $D$. To every additive process $M$ we associate the adapted process $M(t)=M(0,t)$ and conversely to every adapted process $\{M(t)\,/\, t \geq 0\}$ we associate the additive process $M(s,t)=M(t)-M(s)$. An additive process $M$ is \textit{regular} if, for all $\xi \in D$ and  $r \leq s \leq t$, $H_{r]}(\xi) \subseteq dom(\overline {M^\#} (s,t))$ and $M^*(s,t)D \subseteq  D_s^{\prime}$, where $\overline{ M}$ denotes the closure of $M$ and $M^\#$ denotes either $M$ or $M^*$.  

If $M$ is a regular additive process and $F$ is a simple adapted process then the \textit{left (resp. right) stochastic integral } of $F$ with respect to $M$ over the interval $[0,t]$ is defined as an operator on $D_t^{\prime}$ by

\[
\int_0^t\,dM_s\,F_s=\sum_{k=1}^n\overline{ M} (t_k \wedge t,t_{k+1} \wedge t_0)\,F_{t_k}|_{D_t^{\prime}}
\]

\[
\mbox{ (resp. }\int_0^t\,F_s \, dM_s=\sum_{k=1}^n F_{t_k}\,
M (t_k \wedge t,t_{k+1} \wedge t_0))
\]

An additive regular process is an \textit{integrator of scalar type} if for each $\xi \in D$ there exists a finite set $J(\xi) \subseteq D$ such that for each simple process $F$ and $t \geq 0$

\[
\|\int_0^t\,dM_s\,F_s \xi\|^2 \leq c_{t,\xi} \int_0^t \, d \mu_{\xi}(s)\,\sum_{\eta \in J(\xi)}\|F_s\eta\|^2
\]

and

\[
\|\int_0^t\,F_s^* \,dM_s^*\xi\|^2 \leq c_{t,\xi} \int_0^t \, d \mu_{\xi}(s)\,\sum_{\eta \in J(\xi)}\|F_s^*\eta\|^2
\]

and also, for all $\eta \in J(\xi)$, 

\[
J(\eta) \subseteq J(\xi)
\]

where $c_{t,\xi} \geq 0$  and $\mu_{\xi}$ is a positive, locally finite, non atomic measure. 

If $M$ is an integrator of scalar type and, for all $\xi \in D$, $\mu_{\xi}$  is absolutely continuous with respect to Lebesgue measure then the stochastic integral with respect to $M$ can  be extended by continuity to processes $F \in L^2_{loc}([0,+\infty),dM)$, the space of all adapted processes $F$ with the topology induced by the seminorms

\[
\|F\|^2_{\eta,t,\mu_{\xi}}=\int_0^t\,\|F_s\eta\|^2\,d\mu_{\xi}(s)
\]

such that for all $\xi \in D$, $\eta \in J(\xi)$ and $0 \leq t <+\infty$

\[
\int_0^t\,(\|F_s\eta\|^2+\|F^*_s\eta\|^2)\,d\mu_{\xi}(s) < +\infty
\]

The thus extended stochastic integral has the usual linearity properties, and the maps $(s,t) \rightarrow \int_s^t\,dM_z\,F_z$ and $(s,t) \rightarrow \int_s^t\,F_z\,dM_z$  are additive, adapted processes, strongly continuous on $D$.

Suppose that $I$ is a set of finite cardinality  and let $\{M_{\alpha}\,/\,\alpha \in I\}$ be a set of integrators of scalar type.  Consider the quantum stochastic differential equation

\[
X(t)=X_0+\int_0^t\,\sum_{\alpha \in I}\,dM_{\alpha}(s)\,F_{\alpha}(s)X(s)G_{\alpha}(s)
\]

or in differential form

\[
dX(t)=\sum_{\alpha \in I}\,dM_{\alpha}(t)\,F_{\alpha}(t)X(t)G_{\alpha}(t),\,\,\,X(0)=X_0,\,\,t \geq 0
\]

where $X_0 \in A_{0]}$ and the coefficients $F_{\alpha},G_{\alpha}$ are locally bounded adapted processes leaving the domain $D$ invariant. If for all $\alpha \in I$, for all adapted processes $P$ integrable with respect to $M_{\alpha }$, and for all continuous functions $u,v$ on $[0,+\infty)$ satisfying $u(s) \leq s, v(s) \leq s$ for all $s \in [0,+\infty)$, the family of operators on $H$ $\{s \rightarrow F_{\alpha}(u(s))\,P(s)\,G_{\alpha}(v(s))\}$ is an adapted process integrable with respect to $M_{\alpha }$, then the above quantum stochastic differential equation has a unique locally bounded solution $X$ which is strongly continuous on $D$.

The above result can easily be extended to equations of the form

\[
dX(t)=\sum_{\alpha \in I}\,dM_{\alpha}(t)\,F_{\alpha}(t)(w(t)X(t)+z(t))G_{\alpha}(t),\,\,\,X(0)=X_0,\,\,t \geq 0
\]

where $w,z$ are locally bounded adapted processes leaving the domain $D$ invariant.

Following \cite{b} we restrict, in what follows, the term \textit{process} to processes leaving the domain $D$ invariant, and we denote the *-algebra of all processes by $W$.

If $M=\{M_{\alpha}\,/\,\alpha \in I \}$ is a self-adjoint family of regular integrator processes (i.e $M_{\alpha} \in M$ implies $(M_{\alpha})^* \in M$) then for all $s,t \in [0,+\infty)$ with $s < t$, for all $\alpha, \beta \in I$, for all $\xi \in D$, and for all adapted processes $F$, the \textit{Meyer Bracket} or \textit{mutual quadratic variation } of $M_{\alpha}$ and $M_{\beta}$, defined by

\[
[[M_{\beta},M_{\alpha}]](s,t)=\lim_{|\Pi| \rightarrow 0}\, \sum\,M_{\beta}(t_{k-1},t_k)\,M_{\alpha}(t_{k-1},t_k)\,F_s\xi
\]

where $\Pi$ is a partition of $[s,t]$,  exists in norm and defines an additive adapted process satisfying

\begin{eqnarray*}
&M_{\beta}(s,t)\,M_{\alpha}(s,t)\,F_s\xi= \{\int_s^t\,dM_{\beta}(r)\,M_{\alpha}(s,r)+\int_s^t\,dM_{\alpha}(r)\,M_{\beta}(s,r)+&\\
&[[M_{\beta},M_{\alpha}]](s,t)\}\,F_s\xi&
\end{eqnarray*}

Assuming, for each pair $(\alpha,\beta) \in I \times I$, the existence of a family $\{c^{\gamma}_{\alpha \beta}\,/\,\gamma \in I \}$ of \textit{structure processes} such that for each $s,t \in [0,+\infty)$ with $s < t$

\[
[[M_{\beta},M_{\alpha}]](s,t)=\sum_{\gamma \in I}\,\int_s^t\,c^{\gamma}_{\alpha \beta} (r)\,dM_{\gamma}(r)
\]

and defining the \textit{differential} of an additive process $M$ by

\[
dM(t)=M(t,t+dt)
\]

we obtain

\begin{eqnarray}
d(M_{\beta}\,M_{\alpha})(t)=dM_{\beta}(t)\,M_{\alpha}(t)+M_{\beta}(t)\,dM_{\alpha}(t)+dM_{\beta}(t)\,dM_{\alpha}(t)
\end{eqnarray}

where the last product on the right is computed with the use of the \textit{It\^{o} table}

\begin{eqnarray}
dM_{\beta}(t)\,dM_{\alpha}(t)=\sum_{\gamma \in I}\,c^{\gamma}_{\alpha \beta} (t)\,dM_{\gamma}(t)
\end{eqnarray}

Assuming further that the $M_{\alpha}\,'s$ satisfy a \textit{ $\rho$-commutation relation} i.e that for each $\alpha \in I$ there exists an automorphism $\rho_{\alpha}$ of $W$ mapping adapted processes into adapted processes, and such that

\begin{eqnarray}
 \rho_{\alpha}^2=id
\end{eqnarray}

\medskip

where $id$ denotes the identity map,

 and for every $\xi \in D$, $s < t$, and adapted processes $F$, $F_s\xi \in dom(M_{\alpha}(s,t))$, $M_{\alpha}(s,t)\xi \in dom(\rho_{\alpha}(F_s))$,

\begin{eqnarray}
M_{\alpha}(s,t)F_s\xi=\rho_{\alpha}(F_s)M_{\alpha}(s,t)\xi  
\end{eqnarray}

and

\begin{eqnarray}
F_sM_{\alpha}(s,t)\xi=M_{\alpha}(s,t)\rho_{\alpha}(F_s)\xi
\end{eqnarray}

i.e stochastic processes commute with the stochastic differentials of the integrators,  we can extend (2.1) to processes $X=\{X(t)\,/\,t \geq 0\}$ and  $Y=\{Y(t)\,/\,t \geq 0\}$ of the form

\[
X(t)=\sum_{\alpha \in I}\,\int_s^t\,dM_{\alpha}(z)\,H_{\alpha}(z)\,,\,\,Y(t)=\sum_{\alpha \in I}\,\int_s^t\,dM_{\alpha}(z)\,K_{\alpha}(z)
\]

where $H_{\alpha}, K_{\alpha}$ are for each  $\alpha \in I$ strongly continuous adapted processes. 

We thus have

\begin{eqnarray}
d(X \, Y)(t)=dX(t)\,Y(t)+X(t)\,dY(t)+dX(t)\,dY(t)
\end{eqnarray}

where

\[
dX(t)=\sum_{\alpha \in I}\,dM_{\alpha}(t)\,H_{\alpha}(t),\,\,dY(t)=\sum_{\alpha\in I}\,dM_{\alpha}(t)\,K_{\alpha}(t)
\]
 
$dX(t)\,dY(t)$ is computed with the use of the It\^{o} table (2.2), and (2.6) is understood weakly on $D$ i.e for all $\xi,\eta \in D$ 

\[
<d(X \, Y)(t)\xi,\eta>=<[dX(t)\,Y(t)+X(t)\,dY(t)+dX(t)\,dY(t)]\xi,\eta>
\]

\medskip
 \section{\textbf{ Quantum  feedback control }}
\medskip

Within the framework of the AFQ calculus described in the previous section, we consider an operator process $X=(X(t))_{t\geq0}$, defined on a complex separable Hilbert space $H$ containing a total invariant subset $D$ , with evolution described by a quantum stochastic differential equation of the form

\begin{eqnarray}
&dX(t)=d\tau(t)(FX+Gu+L)(t)+\sum_{a\in I}dM_a(t)F_a(t)(wX+z)(t)&\nonumber\\
\\
&X(0)=C,\, 0\leq t \leq T < +\infty&\nonumber
\end{eqnarray}

or of the form

\begin{eqnarray}
&dX(t)=-\{d\tau(t)(FX+Gu+L)(t)+\sum_{a\in I}dM_a(t)F_a(t)(wX+z)(t)\}&\nonumber\\
&&\\
&X(T)=C,\, 0\leq t \leq T < +\infty&\nonumber
\end{eqnarray}

where $I$ is a finite set, $M=\{M_a \,/\,a\in I\}$ is a self-adjoint  family of  regular integrators of scalar type  satisfying a $\rho$-commutation relation with It\^{o} multiplication rules

\begin{eqnarray}
dM_a(t)\,dM_b(t)&=&\sum_{l \in I}c_{ab}^l (t)\,dM_l (t)
\end{eqnarray}

and

\begin{eqnarray}
dM_a (t)\,d\tau (t)&=&d\tau (t)\,dM_a (t)=0
\end{eqnarray}

\medskip

where the $c_{ab}^l$ 's are \textit{structure processes}, $\tau $ is a real-valued measure on $[0,+\infty )$, absolutely continuous with respect to Lebesgue measure,    $dM_a \ne d\tau $ for all $a \in I$, and  the coefficient processes $F, G, u, L, w, z$ and $F_a$ for all $a \in I$, are as in Section 2. We assume also that $C$ is a bounded operator on $H$.  As shown in \cite{b}, (3.1) and (3.2) admit unique locally bounded solutions.

Under extra assumptions on the coefficients, e.g if $F,G,L,w,z$ are real-valued functions and the $F_{\alpha}$'s are complex-valued functions with conjugate $\overline{F_{\alpha}}=F_{\alpha^*}$ where $a^*$ is defined by $(M_a)^*=M_{a^*}$, then  the $X(t)\,'s$ correspond to classical quantum mechanical observables.

 As in \cite{h} for the classical case we associate with (3.1) and (3.2)   respectively the quadratic performance criteria (3.5) and (3.6)  of the following

\begin{definition}

For $\xi \in D$, $0\leq T < +\infty$, $Q,\,R,\,m,\,\eta$  locally bounded, strongly continuous, adapted processes such that $R$ has an inverse $R^{-1}$ with the same properties, for $Q_T,\,m_T,\,Q_0,\,m_0$   bounded operators on $H$ with $R(t)\geq 0$, $Q(t)\geq 0$, $R^{-1}(t)>0$ for all $t\in [0,T]$, and for $Q_T\geq 0$, $Q_0\geq 0$  define

\begin{eqnarray}
&\tilde J_{\xi,T}(u)=\int_0^T\,d\tau(t)(\langle X(t)\xi,Q(t)X(t)\xi\rangle + \langle u(t)\xi,R(t)u(t)\xi\rangle& \\
&+2\langle m(t)X(t)\xi,\xi\rangle+2\langle \eta(t)u(t)\xi,\xi\rangle)&\nonumber\\
&+\langle Q_T\,X(T)\xi,X(T)\xi\rangle +2\langle m_T\,X(T)\xi,\xi\rangle\nonumber&
\end{eqnarray}

and
 
\begin{eqnarray}
&\tilde J_{\xi,0}(u)=\int_0^T\,d\tau(t)(\langle X(t)\xi,Q(t)X(t)\xi\rangle + \langle u(t)\xi,R(t)u(t)\xi\rangle &\\
&+2\langle m(t)X(t)\xi,\xi\rangle+2\langle \eta(t)u(t)\xi,\xi\rangle)&\nonumber\\
&+\langle Q_0\,X(0)\xi,X(0)\xi\rangle +2\langle m_0\,X(0)\xi,\xi\rangle\nonumber&
\end{eqnarray}

\end{definition}

We view $u$ as a control process and we consider the problem of choosing it so as to minimize $\tilde J_{\xi,T}(u)$ (resp. $\tilde J_{\xi,0}(u)$), thus controlling the evolution of the process $X$ described by (3.1), (resp. (3.2)).

In the classical  case, i.e in the case of noise described by classical, scalar or vector-valued Brownian motion, the problem is well-studied (see e.g \cite{h}). In the  quantum case fundamental work on the subject of control and filtering  has been done in \cite{e,f,g} and also in \cite{a,j,k,l,l2}.

\medskip

\begin{theorem}
Let $T>0$ be a finite time, let $X=\{X(t)\,/\,t \geq  0\}$ be a locally bounded adapted process with evolution described by (3.1) (resp.(3.2)) and with performance criterion (3.5) (resp.(3.6)), and suppose that there exists a self-adjoint, locally bounded process $ \Pi = \{ \Pi (t)\,/\,t \geq 0\}$  satisfying, weakly on the invariant domain $D$, the generalized  stochastic Riccati differential equation

\begin{eqnarray}
&d\tau(t)(F^*\Pi +\Pi F+Q-\Pi GR^{-1}G^*\Pi )(t)+[(\sum_{a\in I}dM_a\,F_aw)^* \,\Pi + &\nonumber\\
&\Pi \,\sum_{a\in I}dM_a \,F_a w \pm (\sum_{a\in I}dM_a \,F_a w)^* \,\Pi \,(\sum_{a\in I}dM_a \,F_a w)](t) \pm & \nonumber\\
&(\sum_{a\in I}dM_a  \,F_aw \pm id)^*(t)\,d\Pi (t)\,(\sum_{a\in I} dM_a \,F_a w \pm id)(t)=0&\nonumber\\
&&\\
&\Pi(T)=Q_T(\mbox{resp. }\,\Pi(0)=Q_0), 0\leq t \leq T &\nonumber
\end{eqnarray}

and a locally bounded adapted process $r=\{r(t)\,/\,t \geq 0\}$ satisfying the stochastic differential equation

\begin{eqnarray}
&d\tau(t)(F^*r-\Pi GR^{-1}G^*r+\Pi L+m^*-\Pi GR^{-1}\eta^* )(t)+[(\sum_{a\in  I}dM_a\, F_a w)^* r+&\nonumber\\
&\Pi \,\sum_{a\in I}dM_a \, F_a z + d\Pi \, \sum_{a\in I}dM_a \, F_a z \pm  (\sum_{a\in I}dM_a \, F_a w)^* \,\Pi \,\sum_{a\in I}dM_a \, F_a z \pm &\nonumber\\
&(\sum_{a\in I}dM_a\, F_a w)^*\,d\Pi \,\sum_{a\in I}dM_a \, F_a z ](t)+  [(\sum_{a\in I}dM_a \, F_a w \pm id)^* dr](t) =0&\nonumber\\
&&\\
&r(T)=m_T^* (\mbox{resp.  }\,r(0)= m_0^* ) , 0\leq t \leq T,&\nonumber
\end{eqnarray}

where $id$ denotes the identity operator on $H$, the plus sign in $\pm$ in (3.7) and (3.8) is associated with (3.1) and (3.5), and the minus with (3.2) and (3.6). Then the performance criterion $\tilde J_{\xi,T}(u)$ (resp. $\tilde J_{\xi,0}(u)$) appearing in (3.5) (resp.  (3.6)) is minimized by the feedback control process

\begin{eqnarray*}
u=-R^{-1}(G^*(\Pi X+r)+\eta^*).
\end{eqnarray*}

Note: For $w=0$ and $z=id$ we obtain the solution to the quantum analogue of the "linear regulator" problem of classical control theory.

\end{theorem}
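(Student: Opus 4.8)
The plan is to transpose the classical completion-of-squares argument for the linear-quadratic regulator into the AFQ calculus, using the quantum It\^{o} product rule (2.6) with the table (2.2) in place of the classical It\^{o} formula. I treat the forward problem (3.1), (3.5) with the plus signs; the backward problem (3.2), (3.6) is identical once the sign of the dynamics is reversed, which is exactly what the $\pm$ in (3.7) and (3.8) records. First I would introduce the auxiliary quadratic-linear form
\begin{eqnarray*}
\Phi(t)=\langle \Pi(t)X(t)\xi,X(t)\xi\rangle+2\,\mathrm{Re}\,\langle X(t)\xi,r(t)\xi\rangle .
\end{eqnarray*}
Because $\Pi$ is self-adjoint with $\Pi(T)=Q_T$ and $r(T)=m_T^*$, the terminal value $\Phi(T)$ reproduces precisely the boundary terms $\langle Q_T X(T)\xi,X(T)\xi\rangle+2\langle m_T X(T)\xi,\xi\rangle$ of (3.5). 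Writing $\Phi(T)=\Phi(0)+\int_0^T d\Phi(t)$ and noting that $\Phi(0)$ depends only on the fixed datum $X(0)=C$ and so is independent of the control, the criterion $\tilde J_{\xi,T}(u)$ becomes $\Phi(0)$ plus the integral over $[0,T]$ of the cost density of (3.5) added to $d\Phi(t)$.

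The core computation is $d\Phi$. Applying (2.6) to the triple product $X^*\Pi X$, using $dX=d\tau(FX+Gu+L)+\sum_a dM_a F_a(wX+z)$, the relation $d\tau\,dM_a=0$ from (3.4), and the $\rho$-commutation relations (2.4)--(2.5) to commute the operator coefficients past the differentials, the product rule yields the seven terms $dX^*\Pi X+X^*d\Pi\,X+X^*\Pi\,dX+dX^*\Pi\,dX+dX^*d\Pi\,X+X^*d\Pi\,dX+dX^*d\Pi\,dX$. The single-differential terms produce the $d\tau(F^*\Pi+\Pi F+\cdots)$ density together with the martingale pieces $(\sum_a dM_a F_a w)^*\Pi+\Pi\sum_a dM_a F_a w$; the second-order term $dX^*\Pi\,dX$ contributes $(\sum_a dM_a F_a w)^*\Pi(\sum_a dM_a F_a w)$ via the It\^{o} table (2.2); and the three terms $X^*d\Pi\,X$, $dX^*d\Pi\,X$, $X^*d\Pi\,dX$ together with the genuinely noncommutative triple term $dX^*d\Pi\,dX$ — which does \emph{not} vanish here, since $dM_a dM_b$ stays at the martingale level — assemble exactly into $(\sum_a dM_a F_a w+id)^*d\Pi(\sum_a dM_a F_a w+id)$. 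The inhomogeneous ($z$-dependent) contributions, together with the analogous expansion of $d\langle X\xi,r\xi\rangle$ whose cross term $\langle dX\xi,dr\xi\rangle$ generates the $(\sum_a dM_a F_a w+id)^*dr$ and $d\Pi\sum_a dM_a F_a z$ contributions, collect into the left-hand side of (3.8).

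Substituting (3.7) and (3.8) then annihilates, by construction, every term of the integrand that is quadratic in $X$ (via the Riccati equation) and every term linear in $X$, i.e.\ those carrying $L,m,z,\eta$ (via the equation for $r$). The only control-dependent quantity surviving in the integrand is the $d\tau$-density $\langle u\xi,Ru\xi\rangle+2\,\mathrm{Re}\,\langle u\xi,(G^*(\Pi X+r)+\eta^*)\xi\rangle$. Since $R(t)>0$, this completes the square as $\langle (u-u^*)\xi,R(u-u^*)\xi\rangle$ minus the $u$-independent quantity $\langle R^{-1}(G^*(\Pi X+r)+\eta^*)\xi,(G^*(\Pi X+r)+\eta^*)\xi\rangle$, with $u^*=-R^{-1}(G^*(\Pi X+r)+\eta^*)$; the subtracted quantity is precisely the $-\Pi GR^{-1}G^*\Pi$ term of (3.7) together with the $-\Pi GR^{-1}G^*r$ and $-\Pi GR^{-1}\eta^*$ terms of (3.8), so it has already been absorbed. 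Hence
\begin{eqnarray*}
\tilde J_{\xi,T}(u)=\tilde J_{\xi,T}(u^*)+\int_0^T d\tau(t)\,\langle (u-u^*)\xi,R(u-u^*)\xi\rangle\;\ge\;\tilde J_{\xi,T}(u^*),
\end{eqnarray*}
with equality if and only if $u=u^*$, which is the asserted feedback law.

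The main obstacle is the second step: carrying out the quantum It\^{o} expansion of $d(X^*\Pi X)$ and $d(X^*r)$ rigorously and weakly on $D$. Three features make it more delicate than the classical computation. One must retain the triple-differential correction $dX^*d\Pi\,dX$, which is absent classically but survives here; one must commute the noncommuting operator coefficients through the integrator differentials using the automorphisms $\rho_a$ of (2.4)--(2.5); and one must keep every identity weak on the invariant domain $D$, since $X,\Pi,r$ need not be simultaneously bounded. Matching the resulting expression coefficient-by-coefficient with the exact groupings of (3.7) and (3.8), and verifying that reversing the dynamics of (3.2) flips exactly the signs marked $\pm$, is the bookkeeping heart of the argument. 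Once this is in place, the special case $w=0$, $z=id$ reduces (3.7) to an ordinary operator Riccati equation and yields the linear-regulator law by inspection.
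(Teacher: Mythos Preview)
Your argument is correct, but it is not the paper's argument. You run the classical \emph{completion-of-squares} (value-function) proof: introduce the quadratic form $\Phi(t)=\langle \Pi X\xi,X\xi\rangle+2\,\mathrm{Re}\,\langle X\xi,r\xi\rangle$, expand $d\Phi$ by the quantum product rule, and use (3.7)--(3.8) to cancel all $X$-dependent pieces so that only $\int d\tau\,\langle (u-u^{\mathrm{opt}})\xi,R(u-u^{\mathrm{opt}})\xi\rangle$ survives. The paper instead uses the \emph{perturbation} (two-trajectory) method: it writes $u=\Lambda X+\lambda+\mu$, lets $Y$ solve the dynamics with $\mu=0$, sets $\hat X=X-Y$, and shows that the cross term $K$ in the decomposition $\tilde J_{\xi,T}(u)=\tilde J_{\xi,T}(u_0)+[\text{manifestly}\ge 0]+2\,\mathrm{Re}\,K$ vanishes for $\Lambda=-R^{-1}G^*\Pi$, $\lambda=-R^{-1}(G^*r+\eta^*)$, by computing $d(\hat X^{\,*}p)$ with $p=r+\Pi Y$. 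The It\^o bookkeeping is essentially the same in both proofs --- your seven-term expansion of $d(X^*\Pi X)$ corresponds line by line to the paper's expansion of $d(\hat X^{\,*}p)$ --- but the packaging differs: your route gives the minimum value $\Phi(0)+\mathrm{const}$ directly and makes the role of the $-\Pi GR^{-1}G^*\Pi$ term transparent as the square-completion residue, while the paper's route avoids tracking the $u$-independent leftover terms at the price of introducing the auxiliary trajectory $Y$ and the homogeneous process $\hat X$. One small point you gloss over and should make explicit: the feedback law $u=-R^{-1}(G^*(\Pi X+r)+\eta^*)$ couples $u$ to the as-yet-unknown $X$, so attaining equality in your final inequality requires that the closed-loop equation has a solution; this is covered by the existence theory of Section~2, but it is a step the paper's argument also relies on implicitly when it asserts that $\mu=0$ forces $\hat X\equiv 0$.
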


\begin{proof} We will give the proof for (3.1) and (3.5). The proof for (3.2) and (3.6) is similar. So let $u=\Lambda X+\lambda+ \mu$ where $\Lambda,\,\lambda$ are fixed processes to be chosen later and $\mu$ is the new control. We will choose $\Lambda,\,\lambda$ so that the minimizing new control $\mu$ is identically $0$. Replacing $u$ by $\Lambda X+\lambda+ \mu$ in (3.1) we obtain

\begin{eqnarray*}
&dX(t)=d\tau (t)(FX+G\Lambda X+G\lambda +G\mu +L)(t)+\sum_{a\in I}dM_a(t)F_a(t)(wX+z)(t)&\nonumber\\
&&\\
&X(0)=C,\, 0\leq t \leq T < +\infty &\nonumber
\end{eqnarray*}

Let $Y$ be the solution of the above equation corresponding to $\mu=0$, i.e

\begin{eqnarray}
&dY(t)=d\tau (t)(FY+G\Lambda Y+G\lambda  +L)(t)+\sum_{a\in I}dM_a(t)F_a(t)(wY+z)(t)&\nonumber\\
&&\nonumber\\
&Y(0)=C,\, 0\leq t \leq T < +\infty &\nonumber
\end{eqnarray}

with corresponding control $u_0=\Lambda Y+\lambda$. Letting $\hat X=X-Y$ we obtain

\begin{eqnarray}
&d\hat X(t)=d\tau (t)(F \hat X+G\Lambda \hat X +G\mu)(t)+\sum_{a\in I}dM_a(t)F_a(t)(w \hat X)(t)&\nonumber\\
&&\nonumber\\
&\hat X(0)=0,\, 0\leq t \leq T < +\infty &\nonumber
\end{eqnarray}

and using $u=\Lambda \hat X+ u_0 +\mu$ (3.5) becomes

\begin{eqnarray}
&\tilde J_{\xi,T}(u)=\tilde J_{\xi,T}(u_0)+ \int_0^T\,d\tau (t)(\langle \xi,[\hat X^*Q \hat X +(\Lambda \hat X+\mu)^*R(\Lambda \hat X + \mu)](t)\xi \rangle& \nonumber\\
&+\langle \xi,\hat X^*(T)Q_T\hat X(T)\xi \rangle +2\Re\,K&
\end{eqnarray}

where $\Re\,K$ denotes the real part of $K$ and

\begin{eqnarray}
&K=\int_0^T\,d\tau (t)\langle \xi,[\hat X^*QY +(\Lambda \hat X+\mu)^*R(\Lambda  Y + \lambda) +\hat X^*m^*+&\\
&(\Lambda \hat X+\mu)^*\eta^*](t)\xi \rangle + \langle \xi,[\hat X^*(T)m_T^*+
\hat X^*(T)Q_TY(T)]\xi \rangle \nonumber&
\end{eqnarray}

We will show that if 

\begin{eqnarray}
\Lambda=-R^{-1}G^*\Pi,\, \lambda =-R^{-1}(G^*r+\eta^*)
\end{eqnarray}

 then $K=0$. In view of (3.9) we will then conclude that (3.5) is minimized by $\mu=0$. In so doing, let $p(t)=r(t)+\Pi(t)Y(t)$. Then

\begin{eqnarray}
 \langle \xi,[\hat X^*(T)m_T^*+\hat X^*(T)Q_TY(T)]\xi \rangle =\int_0^T\,d\langle \xi,(\hat X^*p)(t)\xi \rangle 
\end{eqnarray}

Using

\begin{eqnarray}
&d\hat X^*(t)=d\tau (t)( \hat X ^* F^* + \hat X ^*{\Lambda}^* G^* +{\mu}^* G^*)(t)+\sum_{a\in I}dM_{a^*}(t){\rho}_{a^*} ( \hat X^* w^* F_a^*)(t)&\nonumber\\
&&\nonumber\\
&\hat X^*(0)=0,\, 0\leq t \leq T < +\infty &\nonumber
\end{eqnarray}

where $a^*$ is defined by $(M_a)^*=M_{a^*}$, and (2.6) to compute the right hand side of (3.12),  (3.10) becomes

\begin{eqnarray*}
&K=\int_0^T\,<\xi, \{\hat X^* [d\tau ((F^* \Pi +\Lambda^* G^* \Pi +\Pi (F+G \Lambda)+Q+\Lambda^* R \Lambda)+&\\
&\sum_{a \in I} w^* F_a^* \,dM_{a^*} \, \Pi + \Pi  \, \sum_{a \in I}dM_a \,F_a w +d\Pi \, \sum_{a \in I}dM_a \,F_a w+&\\
 &\sum_{a \in I}w^* F_a^*\,dM_{a^*}\,d\Pi+ \sum_{a \in I}w^* F_a^*\,dM_{a^*}\,\Pi\,\sum_{a \in I}dM_a \,F_a w+&\\
& \sum_{a \in I}w^* F_a^*\,dM_{a^*}\,d\Pi \,\sum_{a \in I}dM_a \,F_a w+d\Pi)Y+(d\tau(F^* r+\Lambda^* G^* r+&\\
&\Pi(G \lambda+L)+\Lambda^* R \lambda+m^*+\Lambda^* \eta^* )+\sum_{a \in I}w^* F_a^* \,
dM_{a^*}\,r+\Pi \sum_{a \in I}dM_a \,F_a z+&\\
&d\Pi \,\sum_{a \in I}dM_a \,F_a z +\sum_{a \in I}w^* F_a^* \,dM_{a^*}\,dr+\sum_{a \in I}w^* F_a^*\,
dM_{a^*}\,\Pi\,\sum_{a \in I}dM_a \,F_a z +&\\
&\sum_{a \in I}w^* F_a^*\,
dM_{a^*}\,d\Pi \,\sum_{a \in I}dM_a \,F_a z+dr)]+\mu^* [(G^* \Pi+R^* \Lambda )Y+&\\
&(G^*r+R \lambda+\eta^*)]\,d\tau \}(t)\xi >&
\end{eqnarray*}

Replacing in the above $\Lambda$ and $\lambda$ by $-R^{-1}G^*\Pi$ and $-R^{-1}(G^*r+\eta^*)$ respectively we see that the coefficients of $Y$ are, in view of (3.7), equal to zero. The same is true, by (3.8), for the constant terms. Thus $K=0$.

\end{proof}
 
\medskip

\begin{definition} Following \cite{b},
the family of integrators of scalar type \break $\{M_0,M_a\,/\,a\in I\}$ where $dM_0=d\tau$, is said to be linearly independent if the equality

\[
 \int_0^t\,d\tau(s)\,G(s)+\sum_{a\in I}\,dM_a(s)\,G_a(s)=0
\]

 for all families $\{G,G_a\,/\,a\in I\}$ of adapted processes and all $t\geq 0$, implies that $G=G_a=0$ for all $a\in I$.
\end{definition}

\begin{proposition}
 If the family of integrators of scalar type appearing in (3.1) and (3.2) is linearly independent then the Riccati equation (3.7) can be put in the form 

\[
d\Pi (t)= d\tau (t)\,A(t,\Pi (t))+ \sum_{a\in I}\,dM_a(t)\,B_a(t,\Pi (t))
\]

 where $A(t)=A(t,\Pi (t))$ and $B_a(t)=B_a(t,\Pi (t))$ can be described as the solutions of the operator equations

\begin{eqnarray}
&(F^*\Pi +\Pi F+Q-\Pi GR^{-1}G^*\Pi \pm A \pm \sum_{a,b\in I}[c_0(a,b)\rho _b(\rho _a(w^* F_{j(a)}^*)\Pi )F_b w\pm &\nonumber\\
 &c_0(a,b)\rho _b(\rho _a(w^*F_{j(a)}^*))B_b\pm c_0(a,b)\rho _b (B_a)F_b w+\sum_{\gamma,\epsilon \in I}c_{\epsilon}(a,b)c_0(\epsilon,\gamma)&\nonumber\\
&\rho _{\gamma}(\rho _b(\rho _a(w^*F_{j(a)}^*))B_b)F_{\gamma} w] )(t)=0,\,\,\, \forall t \in [0,T]&\nonumber  
\end{eqnarray}

and for all $J \in I$  

\begin{eqnarray}
&(\rho _J(w^*F_{j(J)}^*)\Pi +\rho_J (\Pi) F_J w \pm B_J + \sum_{a,b\in I}[c_J (a,b)\rho _b (B_a) F_b w +&\nonumber\\
& c_J (a,b) \rho_b (\rho_a(w^*F_{j(a)}^*))B_b \pm c_J(a,b)\rho_b (\rho_a (w^* F_{j(a)}^*)\Pi )F_b W \pm \sum_{\gamma,\epsilon \in I} &\nonumber\\
&\rho_J(c_{\epsilon}(a,b))c_J(a,\gamma)\rho _{\gamma}(\rho _b(\rho _a(w^*F_{j(a)}^*))B_b)F_{\gamma} w])(t)=0,\,\,\, \forall t \in [0,T]&\nonumber
\end{eqnarray}

while (3.8) can be put in the form

\[
dr(t)=d\tau (t)\,C(t,r(t))+\sum_{a\in I}\,dM_a(t)\,D_a(t,r(t))
\]

 where $C(t)=C(t,r(t))$ and $D_a(t)=D_a(t,r(t))$ can be described as the solutions of the operator equations

\begin{eqnarray}
&(F^*r -\Pi GR^{-1}G^* r+\Pi L+m^* -\Pi GR^{-1} {\eta}^*  \pm C+ \sum_{a,b\in I}[c_0(a,b)\rho _b (B_a)F_b z & \nonumber\\
 &\pm c_0(a,b)\rho _b(\rho _a(w^* F_{j(a)}^*)\Pi )F_b z+
 c_0(a,b)\rho _b(\rho _a(w^*F_{j(a)}^*))D_b& \nonumber\\
 &\pm \sum_{\gamma,\epsilon \in I}c_{\epsilon}(a,b)c_0(\epsilon,\gamma)])(t)=0, \forall t \in [0,T]&\nonumber  
\end{eqnarray}

and for all $J \in I$  

\begin{eqnarray}
&(\rho _J(w^*F_{j(J)}^*)r +\rho_J (\Pi) F_J z \pm D_J + \sum_{a,b\in I}[c_J (a,b)\rho _b (B_a) F_b z +& \nonumber\\
 &\rho_b (\rho_a(w^*F_{j(a)}^*)\Pi)F_b z +\rho_b (\rho_a (w^* F_{j(a)}^*) )D_b ) \pm \sum_{\gamma,\epsilon \in I} \rho_J(c_{\epsilon}(a,b))c_J(\epsilon,\gamma)&\nonumber\\
&\rho _{\gamma}(\rho _b(\rho _a(w^*F_{j(a)}^*)))\rho _{\gamma} (B_b)F_{\gamma}z])(t)=0, \forall t \in [0,T]&\nonumber
\end{eqnarray}

Here the adapted processes $c_0(a,b),c_{\epsilon}(a,b),a,b\in I$ are defined for all $t\geq 0$ by

\begin{eqnarray}
dM_a(t)\,dM_b(t)=d\tau (t) \,c_0(a,b)(t)+ \sum_{\epsilon \in I}\,dM_{\epsilon}(t)\,c_{\epsilon}(a,b)(t)
\end{eqnarray}

 $\rho_a$ is, for each $a\in I$, the corresponding commutation homomorphism, and the mapping $j:I\rightarrow I$ is defined by $j(a^*)=a$.

\end{proposition}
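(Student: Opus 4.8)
The plan is to read (3.7) as an identity between stochastic differentials and to rewrite its left-hand side in the canonical ``integrator'' form $d\tau(t)(\cdots)+\sum_{J\in I}dM_J(t)(\cdots)$, at which point linear independence forces each coefficient to vanish separately. I would start from the observation that, since $\{M_0,M_a\,/\,a\in I\}$ is linearly independent, the self-adjoint solution $\Pi$ has a unique decomposition
\[
d\Pi(t)=d\tau(t)\,A(t)+\sum_{a\in I}dM_a(t)\,B_a(t),
\]
with $A=A(t,\Pi(t))$ self-adjoint and $B_a=B_a(t,\Pi(t))$ adapted, and I would substitute this expression for $d\Pi$ into the last line of (3.7). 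The adjoint factor is first brought into left-integrator form via $(M_a)^*=M_{a^*}$, namely $(\sum_{a}dM_a F_a w)^*=\sum_{a}w^*F_a^*\,dM_{a^*}$, and then reindexed through the involution $j$ (so that $j(a^*)=a$) into a sum over the differentials $dM_b$ carrying coefficients $w^*F_{j(b)}^*$.

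The heart of the computation is a normal-ordering reduction. Using the commutation relations (2.4)--(2.5), I would move every coefficient process --- the $F_a$'s, $w$, $\Pi$, $A$, the $B_a$'s and their adjoints --- to the right of all the differentials; each passage across a $dM_\alpha$ applies the homomorphism $\rho_\alpha$, and this is exactly what generates the nested expressions $\rho_b(\rho_a(w^*F_{j(a)}^*)\cdots)$ in the stated equations. I would then collapse each product of two or three differentials to a single one by repeated use of the split It\^o table (3.13), $dM_a\,dM_b=d\tau\,c_0(a,b)+\sum_{\epsilon}dM_\epsilon\,c_\epsilon(a,b)$, together with (3.4), which annihilates every mixed product involving $d\tau$. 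The triple-differential contributions, coming from the sandwich term $(\sum_a dM_a F_a w)^*\Pi(\sum_b dM_b F_b w)$ and from $(\sum_a dM_a F_a w\pm id)^*\,d\Pi\,(\sum_b dM_b F_b w\pm id)$ after the substitution of $d\Pi$, must be reduced in a fixed order; contracting them two differentials at a time is what produces the doubly indexed coefficients $c_\epsilon(a,b)\,c_0(\epsilon,\gamma)$ and the triply nested homomorphisms $\rho_\gamma(\rho_b(\rho_a(\cdots)))$.

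Once every surviving term carries exactly one differential, (3.7) takes the form $\int_0^t[\,d\tau(s)\,\mathcal{A}(s)+\sum_{J\in I}dM_J(s)\,\mathcal{B}_J(s)\,]=0$ weakly on $D$ for every $t$, where $\mathcal{A}$ and the $\mathcal{B}_J$ are the accumulated coefficients. Invoking linear independence in the sense of Definition 2, each of $\mathcal{A}$ and $\mathcal{B}_J$ must vanish identically on $[0,T]$: the equation $\mathcal{A}=0$ is precisely the displayed operator equation determining $A$, and $\mathcal{B}_J=0$ for each $J\in I$ gives the equation determining $B_J$. Equation (3.8) is handled in the same way, substituting $dr(t)=d\tau(t)\,C(t)+\sum_{a\in I}dM_a(t)\,D_a(t)$, reducing to canonical form and reading off the equations for $C$ and the $D_J$'s; because $\Pi$ and $r$ enter (3.8) linearly, no triple products beyond those already treated in the Riccati case occur. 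Running the same reduction backwards gives the converse implication, so that (3.7) and (3.8) are genuinely equivalent to the stated systems.

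The step I expect to be the main obstacle is the bookkeeping in this reduction rather than any conceptual point: one must apply the It\^o table and the commutation homomorphisms in a consistent order to the threefold products, check that the reduction is well defined --- i.e. that contracting $dM_a\,dM_c\,dM_b$ yields the same coefficients irrespective of which adjacent pair is collapsed first, which is the associativity of the It\^o multiplication --- and keep the involution $j$ and the adjoint operation correctly aligned at every stage. By contrast, the separation into the $A,B_J$ and $C,D_J$ equations is immediate once linear independence is invoked, so the entire difficulty is concentrated in assembling the coefficients $\mathcal{A},\mathcal{B}_J,\mathcal{C},\mathcal{D}_J$ before the separation.
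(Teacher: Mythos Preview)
Your proposal is correct and follows essentially the same route as the paper: substitute the ansatz $d\Pi=d\tau\,A+\sum_a dM_a\,B_a$ into (3.7), pass to left-integrator form via $(M_a)^*=M_{a^*}$ and the involution $j$, normal-order using the $\rho$-commutation relations (2.4)--(2.5), reduce double and triple products of differentials with (3.4) and the split It\^o table (3.13), and then invoke linear independence to read off the coefficient equations; the paper likewise notes that the argument for $r$ is identical. Your remarks on associativity of the It\^o product and on the converse implication go slightly beyond what the paper spells out, but the core argument is the same.
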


\begin{proof} We will only give the proof for $ \Pi = \{ \Pi (t)\,/\,t \geq 0\}$ . The proof for 
\break  $r=\{r(t)\,/\,t \geq 0\}$ is similar. Substituting  $d\Pi = d\tau \,A+ \sum_{a \in I}\,dM_a \, B_a$ in (3.7) we obtain

\begin{eqnarray*}
&\{d\tau\,(F^*\Pi +\Pi F+Q-\Pi GR^{-1}G^*\Pi)+\sum_{a \in I}\,dM_{a^*} \,\rho_{a^*}\,(w^*F^*_a)\Pi+&\\
&\Pi \sum_{a \in I}\,dM_a \, F_a w \pm \sum_{a \in I}\,dM_{a^*} \,\rho_{a^*} \,(w^*F^*_a)\Pi \sum_{b \in I}\,dM_b \, F_b w + &\\
&\sum_{a \in I}\,dM_{a^*} \,\rho_{a^*} \,(w^*F^*_a)(d \tau \,A + \sum_{b \in I}\,dM_b \,B_b)+&\\
&(d \tau \,A+\sum_{a \in I}\,dM_a \,B_a)\sum_{b \in I}\,dM_b \,F_b w &\\
&\pm \sum_{a \in I}\,dM_{a^*} \,\rho_{a^*}\,(w^*F^*_a)(d\tau \,A+ \sum_{b \in I}\,dM_b \, B_b)\sum_{\gamma \in I}\,dM_{\gamma} \, F_{\gamma} w &\\
&\pm
d\tau \,A \pm \sum_{a \in I}\,dM_a \, B_a \}(t)=0&
\end{eqnarray*}

Making use of (3.4), of the $\rho$-commutation relations,  of (3.13), and of

\[
dM_a(t)\,dM_b(t)\,dM_{\gamma}(t)=\sum_{\epsilon, \delta \in I}\,dM_{\epsilon}(t)\,c_{\epsilon}(\delta,\gamma)(t)\rho_{\gamma}(c_{\delta}(a,b)(t))
\]

we obtain after renaiming the indices,

\begin{eqnarray*}
&\{d\tau\,(F^*\Pi +\Pi F+Q-\Pi GR^{-1}G^*\Pi \pm A \pm \sum_{a,b \in I} [c_0(a,b)\rho_b(\rho_a(w^*F^*_{j(a)})\Pi)F_bw &\\
&\pm c_0(a,b)\rho_b(\rho_a(w^*F^*_{j(a)}))B_b \pm c_0(a,b)\rho_b(B_a)F_bw +&\\
&\sum_{\gamma,\epsilon}c_{\epsilon}(a,b)c_0(\epsilon,\gamma) \rho_{\gamma} (\rho_b (\rho_a(w^*F^*_{j(a)}))B_b)F_{\gamma}w])+&\\
& \sum_{J \in I}\,dM_J(\rho_J(w^* F^*_{j(J)})\Pi+\rho_J(\Pi)F_Jw \pm B_J +\sum_{a,b \in I} [c_J(a,b) \rho_b(B_a)F_bw +&\\
&c_J(a,b) \rho_b(\rho_a(w^*F^*_{j(a)}))B_b \pm c_J(a,b)\rho_b(\rho_a(w^*F^*_{j(a)})\Pi)F_bw \pm  &\\
&\sum_{\epsilon, \gamma \in I} \rho_J ( c_{\epsilon}(a,b))c_J(a,\gamma)\rho_{\gamma} (\rho_{b} (\rho_a(w^*F^*_{j(a)}))B_b)F_{\gamma}w])\}(t)=0&
\end{eqnarray*}

from which the result follows by the linear independence assumption.

\end{proof}

\medskip

\section{\textbf{Quantum stochastic Riccati equations}}

\begin{definition}
Following  \cite{b} let $M,M_1,M_2$ be integrator processes of scalar type such that $dM_1=dM$ and $dM_2=dM^*$. If the Meyer bracket $[[M_b,M_a]]$ exists for all  $a,b\in \{1,2\}$ and is a complex-valued nonatomic measure, then the pair $(M_1,M_2)$ is called a Levy pair.

 A Levy pair $(M_1,M_2)$ is said to be of Boson type if $[M_b(I),M_a(J)]=0$ for all  $a,b\in \{1,2\}$ and $I,J \subset [0,+\infty )$ with $I \cap J= \varnothing$, and it is said to be of Fermion type if $\{M_b(I),M_a(J)\}=0$ where $[x,y]=xy-yx$ and $\{x,y\}=xy+yx$. 

For a Boson (resp. Fermion) type Levy pair, $\rho_1=\rho_2=id$ \break (resp. $\rho_1=\rho_2=-id$) where $\rho_1,\rho_2$ are the commutation automorphisms corresponding to $M_1,M_2$ respectively. 

For a Levy pair $(M_1,M_2)$, $dM_b^*(t)\,dM_a(t)=\sigma_{ba}(t)\,dt$, for all  
$a,b\in \{1,2\}$, where the matrix valued function $t\rightarrow (\sigma_{ba}(t))_{a,b\in \{1,2\}}$ is positive definite in the sense that for all complex-valued  continuous functions $f$, $(f\,\,f)\cdot \sigma \cdot (f\,\,f)^t \geq 0$. 

Let $M_0$ be an integrator of scalar type such that $dM_0 (t)=dt$ where $dt$ is the usual time differential. If $M_0,M_1,M_2$ are linearly independent in the sense of Definition 3, then $(M_1,M_2)$ is called a linearly independent Levy pair.

\end{definition}

If $\{M_a\,/\,a \in I\}=\{M_1,M_2\}$ is a linearly independent Levy pair (e.g a Boson or a Fermion Levy pair) then Theorem 1 includes the solution to the control problem of stochastic evolutions driven by quantum Brownian motion in terms of the solution of a stochastic Riccati equation to be studied in more detail in this section.

 In that case  the Riccati equation (3.7) reduces to

\begin{eqnarray}
&d\Pi (t)=dt((\mp F+\sigma _{11} \rho _2 (F_2 w) \rho _1 \rho _2 (F_1 w)+\sigma _{12} 
\rho _2 (F_2 w)F_2 w+&\nonumber\\
&\sigma _{22} 
\rho _1 (F_1 w)\rho _2 \rho _1 (F_2 w)+\sigma _{21 }\rho _1 (F_1 w) F_1 w)^* \Pi + (\mp \Pi F +\sigma_{11} 
 \rho _1 \rho _2 (\Pi )\rho _1 (F_2 w)F_1 w\nonumber\\
&+\sigma _{12} \Pi \rho _2 (F_2 w)F_2 w +\sigma _{22} \rho _2 \rho _1 (\Pi )\rho _2(F_1 w)F_2 w+\sigma _{21} \Pi  \rho _1 (F_1 w)F_1 w)+& \nonumber\\
&\sigma _{11}\rho _1 \rho _2( w^* F_1^* )\rho _1(\Pi )F_1 w
+\sigma _{12} w^* F_1^* \rho _2 (\Pi )  F_2 w  + \sigma _{22} 
\rho _2 \rho _1 (w^* F_2^* )\rho _2(\Pi )F_2 w& \nonumber\\
&+\sigma_{21} w^* F_2^* \rho_1  ( \Pi) F_1 w \mp Q \pm \Pi GR^{-1}G^* \Pi )(t) \mp  dM_1(t)(\rho _1(w^* F_2^* )\Pi  & \nonumber\\
& +\rho _1 (\Pi) F_1 w)(t) \mp  dM_2(t)(\rho _2(w^* F_1^* )\Pi+  \rho _2 (\Pi ) F_2 w)(t)& \nonumber\\
&&\\
&\Pi (T)=Q_T\,\,(\mbox{resp.}\,\Pi (0)=Q_0), 0\leq t \leq T& \nonumber
\end{eqnarray}   

where the plus (resp. minus) sign in $\pm$ (resp. in $\mp$) is associated with (3.1) and (3.5), and the minus (resp. plus) with (3.2) and (3.6).

\begin{theorem}
The Riccati equation  (4.1) admits a unique, adapted, strongly continuous, positive, locally bounded, solution $\Pi =(\Pi (t))_{0 \leq t \leq T}$ defined weakly on the invariant domain $D$.
\end{theorem}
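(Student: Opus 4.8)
The plan is to read (4.1) as a backward quantum stochastic Riccati equation and to establish the asserted properties in the order: local existence and uniqueness, then the a priori bounds and positivity coming from the control interpretation, and finally global existence together with self-adjointness. By Proposition 1 the equation can be written in the canonical form $d\Pi(t) = d\tau(t)\,A(t,\Pi(t)) + \sum_{a\in I} dM_a(t)\,B_a(t,\Pi(t))$ with terminal datum $\Pi(T)=Q_T$, where the drift $A$ is \emph{affine} in $\Pi$ apart from the single quadratic term $\mp\,\Pi G R^{-1} G^* \Pi$, while the diffusion coefficients $B_1,B_2$ are affine in $\Pi$. Thus the only nonlinearity is the quadratic drift term, which is locally Lipschitz in $\Pi$ on norm-bounded sets (since $G$ and $R^{-1}$ are locally bounded), the remaining coefficients being globally Lipschitz.

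First I would prove local existence and uniqueness on a short subinterval $[T-\delta,T]$ by a contraction argument built on the \emph{linear} theory of Section 2. Given an adapted, locally bounded candidate $\Pi^{(n)}$, freeze the quadratic term and let $\Pi^{(n+1)}$ solve the resulting linear backward QSDE; this is of the type (3.2) (equivalently the $wX+z$ form), so by the existence result quoted in Section 2 it has a unique locally bounded solution, adapted and strongly continuous on $D$. I would then estimate $\Pi^{(n+1)}-\Pi^{(n)}$ in the seminorms $\|\cdot\|_{\eta,t,\mu_\xi}$, using the scalar-type integrator inequalities and local boundedness of the coefficients; the local Lipschitz constant of the quadratic term times $\delta$ can be made $<1$, so the map is a contraction for $\delta$ small and its fixed point is the unique local solution. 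Self-adjointness follows from uniqueness: taking adjoints in (4.1) (using $(M_a)^*=M_{a^*}$, $j(a^*)=a$ and $R^{-1}=(R^{-1})^*\ge 0$) sends a solution with self-adjoint datum $Q_T$ to another solution with the same datum, which must then coincide with $\Pi$.

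The a priori estimate that upgrades the local solution to a global one is the main obstacle, and I would obtain it from the control-theoretic meaning of $\Pi$. Suppose $\Pi$ is a solution on $[s,T]$. Applying the completion-of-the-square identity underlying the proof of Theorem 1 to the \emph{homogeneous} problem (set $L=m=\eta=0$, so that $r\equiv 0$) on $[s,T]$, and now keeping track of the boundary term at $s$, one finds that $\langle \xi,\Pi(s)\xi\rangle$ equals the minimum over admissible controls of the performance functional (3.5) with its integral started at $s$ and with state $\xi$ at time $s$. Since $Q(t)\ge 0$, $R(t)\ge 0$ and $Q_T\ge 0$, every term of that functional is nonnegative, so the minimum is nonnegative and $\Pi(s)\ge 0$; evaluating the same functional at the admissible choice $u\equiv 0$ gives an upper bound $\langle\xi,\Pi(s)\xi\rangle\le c_{T,\xi}$ which, by local boundedness of the coefficients and the scalar-type estimates, is finite and uniform in $s\in[0,T]$. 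These two bounds prevent the quadratic term from blowing up, so the local solution cannot explode and, by the standard continuation argument, extends to all of $[0,T]$; uniqueness propagates along the way and strong continuity is inherited from the integral maps of Section 2.

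The genuinely delicate points, which I would treat with care, are making the identification of $\langle\xi,\Pi(s)\xi\rangle$ with the optimal cost rigorous in the weak-on-$D$, representation-free setting — this means applying Theorem 1 with $\Pi$ restricted to $[s,T]$, checking that the feedback $u=-R^{-1}G^*\Pi X$ is admissible, and verifying that the boundary term indeed reproduces $\langle\xi,\Pi(s)\xi\rangle$ — and controlling the noise coefficients $B_a$ through the It\^o table (3.3) and the scalar-type inequalities so that they do not disturb the bound. Throughout, everything is understood weakly on the total invariant domain $D$ and expressed by means of the seminorms, which is what circumvents the unboundedness of the operators involved.
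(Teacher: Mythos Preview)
Your argument is sound in outline but follows a genuinely different strategy from the paper's. You propose the classical ``local existence by contraction, then a priori bounds from the value-function interpretation, then continuation'' route; the paper instead constructs the solution directly by a \emph{monotone iteration} built on a Lyapunov-type integral representation. Concretely, the paper rewrites (4.1) (in the forward form with $\Pi(0)=Q_0$) as
\[
\Pi(t)=K(t,0)\,Q_0\,K(t,0)^*+\int_0^t ds\,K(t,s)\bigl(Q+\Pi GR^{-1}G^*\Pi\bigr)(s)\,K(t,s)^*,
\]
defines $\Pi_{n+1}$ by inserting $\Pi_n$ on the right, and shows that the resulting sequence is positive and, for $n\ge 2$, \emph{decreasing}; the limit is then automatically positive, locally bounded, and a solution. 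Uniqueness is obtained by the same device: if $P=\Pi-\hat\Pi$, one derives an integral representation forcing $P\le 0$, and symmetry gives $P\ge 0$. The terminal-condition case is handled at the very end by the time reversal $s=T-t$, reducing it to the forward case already treated.

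The trade-offs are these. The paper's approach never invokes Theorem~1 or any control interpretation, so there is no circularity to manage and no need to identify $\langle\xi,\Pi(s)\xi\rangle$ with an optimal cost; positivity and the upper bound both fall out of the monotone construction itself. Your approach is more modular and closer to the standard treatment of infinite-dimensional Riccati equations, but the step you flag as delicate really is the crux: you must redo the completion-of-the-square computation on $[s,T]$ with $X(s)=I$ (not merely cite Theorem~1, whose proof is written on $[0,T]$ and stops at showing $K=0$ without explicitly evaluating $\tilde J_{\xi,T}(u_0)$), and you must check that the $dM_a$ terms integrate to zero in the weak-on-$D$ sense so that the boundary term is exactly $\langle\xi,\Pi(s)\xi\rangle$. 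If you carry that out, your argument goes through; but the paper's monotone scheme sidesteps all of it.
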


\begin{proof}
We first consider the case corresponding to (3.1) and (3.5). Equation (4.1) can be written as

\begin{eqnarray}
&d\Pi (t)=[dt( F+\sigma _{11} \rho _2 (F_2 w) \rho _1 \rho _2 (F_1 w)+\sigma _{12} 
\rho _2 (F_2 w)F_2 w+&\nonumber\\ 
 & \sigma _{22} \rho _1 (F_1 w)\rho _2 \rho _1 (F_2 w)
+\sigma _{21 }\rho _1 (F_1 w) F_1 w- GR^{-1}G^* \Pi )(t)+dM_1(t)( F_1 w)(t)&\nonumber\\
& + dM_2(t)( F_2 w )(t)]^*\Pi (t)+[dt ( \Pi F +\sigma_{11} \rho _1 \rho _2 (\Pi )\rho _1 (F_2 w)F_1 w+&\nonumber\\
&\sigma _{12} \Pi \rho _2 (F_2 w)F_2 w + \sigma _{22} \rho _2 \rho _1 (\Pi )\rho _2(F_1 w)F_2 w +\sigma _{21} \Pi  \rho _1 (F_1 w)F_1 w-&\nonumber\\
&\Pi GR^{-1}G^* \Pi )(t)+dM_1(t)(\rho _1(\Pi ) F_1 w)(t)
 + dM_2(t)(\rho _2(\Pi ) F_2 w )(t)]&\nonumber\\
&+dt[\sigma _{11}\rho _1 \rho _2( w^* F_1^* )\rho _1(\Pi )F_1 w 
+\sigma _{12} w^* F_1^* \rho _2 (\Pi )  F_2 w &\nonumber\\
 & + \sigma _{22} \rho _2 \rho _1 (w^* F_2^* )\rho _2(\Pi )F_2 w+\sigma_{21} w^* F_2^* \rho_1  ( \Pi) F_1 w ](t) +dt( Q + \Pi GR^{-1}G^* \Pi )(t)&\nonumber\\
&&\\
&\Pi (0)=Q_0\,\,, 0\leq t \leq T &\nonumber
\end{eqnarray}

Using (2.6), the identities  $dt\,dM_1=dt\,dM_2=dM_1\,dt=dM_2\,dt=0$, $dM_1\,dM_2=dM_2^*\,dM_2=\sigma _{22} dt$, $ dM_2\,dM_1=dM_1^*\,dM_1=\sigma _{11} dt$,  $dM_2\,dM_2=dM_1^*\,dM_2=\sigma _{12} dt$, $dM_1\,dM_1=dM_2^*\,dM_1=\sigma _{21} dt $, and the fact that if $\{ \lambda (t,s)\,/\,t \mbox{(resp.}s) \geq 0 \}$ is for each $s$ (resp. $t$) a process then 

\begin{eqnarray*}
d\left(\int_0^t\,ds\,\lambda (t,s)\right)=dt\,\lambda (t,t)+\int_0^t\,ds\, \,\,d\lambda (t,s)   
\end{eqnarray*}

we can prove by taking the time differential of the right hand side and showing that it satisfies (4.2), that weakly on $D$

\begin{eqnarray*}
\Pi (t)=K(t,0)Q_0K(t,0)^*+\int_0^t\,ds\,K(t,s)(Q + \Pi GR^{-1}G^* \Pi )(s)K(t,s)^*
\end{eqnarray*}

where

\begin{eqnarray*}
&dK(t,s)=[dt( F+\sigma _{11} \rho _2 (F_2 w) \rho _1 \rho _2 (F_1 w)+\sigma _{12} 
\rho _2 (F_2 w)F_2 w+&\\
&\sigma _{22} \rho _1 (F_1 w)\rho _2 \rho _1 (F_2 w)+\sigma _{21 }\rho _1 (F_1 w) F_1 w- GR^{-1}G^* \Pi )(t)&\\
&+dM_1(t)( F_1 w)(t)\ + dM_2(t)( F_2 w )(t)]^* K(t,s)&\\
&&\\
&K(s,s)=id,\,\,s\leq t \leq T &
\end{eqnarray*}

Let the sequence $\{ \Pi _n \} _{n=1}^{+\infty}$ of locally bounded self-adjoint processes, be defined by the iteration scheme

\[
\Pi _1 (t) = Q_0
\]

and for $n \geq 1$

\[
\Pi _{n+1} (t)=K_n(t,0)\,Q_0\,K_n(t,0)^*+\int_0^t\,ds\,K_n(t,s)(Q + \Pi _n GR^{-1}G^* \Pi _n )(s)K_n(t,s)^*
\]

where $K_n(t,s)$ is the unique locally bounded solution of

\begin{eqnarray*}
&dK_n(t,s)=[dt( F+\sigma _{11} \rho _2 (F_2 w) \rho _1 \rho _2 (F_1 w)+\sigma _{12} 
\rho _2 (F_2 w)F_2 w&\\
&+\sigma _{22} 
\rho _1 (F_1 w)\rho _2 \rho _1 (F_2 w)
+\sigma _{21 }\rho _1 (F_1 w) F_1 w- GR^{-1}G^* \Pi )(t)&\\
&+dM_1(t)( F_1 w)(t)\ + dM_2(t)( F_2 w )(t)]^* K_n(t,s)&\\
&&\\
&K_n(s,s)=id,\,\,s\leq t \leq T & 
\end{eqnarray*}

Since $Q_0\geq 0$, $Q(t)\geq 0$, and $R^{-1}(t) >0$, it follows from the above equation that

\begin{eqnarray*}
\Pi _n (t) \geq 0,\mbox{  for all }n=1,2,...,\mbox{ and }t\in [0,T].
\end{eqnarray*}

Moreover, for all $t\in [0,T]$ and $n=2,3,...$

\begin{eqnarray}
0 \leq \Pi _{n+1} (t) \leq \Pi _n (t)\,.
\end{eqnarray}

To prove this, we notice that

\begin{eqnarray*}
&d\Pi _{n+1} (t)=(dt( F+\sigma _{11} \rho _2 (F_2 w) \rho _1 \rho _2 (F_1 w)+\sigma _{12}\rho _2 (F_2 w)F_2 w+&\\
&\sigma _{22}\rho _1 (F_1 w)\rho _2 \rho _1 (F_2 w)
+\sigma _{21 }\rho _1 (F_1 w) F_1 w- GR^{-1}G^* \Pi _n )(t)&\\
&+dM_1(t)( F_1 w)(t) + dM_2(t)( F_2 w )(t))^*\Pi _{n+1} (t)+&\\
&(dt ( \Pi _{n+1} F +\sigma_{11} \rho _1 \rho _2 (\Pi _{n+1} )\rho _1 (F_2 w)F_1 w&\\
&+\sigma _{12} \Pi _{n+1} \rho _2 (F_2 w)F_2 w
 + \sigma _{22} \rho _2 \rho _1 (\Pi _{n+1} )\rho _2(F_1 w)F_2 w &\\
&+\sigma _{21} \Pi _{n+1}  \rho _1 (F_1 w)F_1 w-\Pi _{n+1} GR^{-1}G^* \Pi _n )(t)&\\
&+dM_1(t)(\rho _1(\Pi _{n+1} ) F_1 w)(t) + dM_2(t)(\rho _2(\Pi _{n+1} ) F_2 w )(t))&\\
& +dt(Q+\Pi _n  GR^{-1}G^* \Pi _n)(t)
+dt(\sigma _{11}\rho _1 \rho _2( w^* F_1^* )\rho _1(\Pi  _{n+1})F_1 w &\\
&+\sigma _{12} w^* F_1^* \rho _2 (\Pi _{n+1} )  F_2 w 
 + \sigma _{22}\rho _2 \rho _1 (w^* F_2^* )\rho _2(\Pi _{n+1} )F_2 w&\\
&+\sigma_{21} w^* F_2^* \rho_1  ( \Pi _{n+1}  ) F_1 w )(t) &\\
&&\\
&\Pi _{n+1} (0)=Q_0\,\,, 0\leq t \leq T &
\end{eqnarray*}

Letting $P_n(t)=\Pi _{n+1} (t) - \Pi _{n} (t)$ the above yields

\begin{eqnarray*}
&dP _{n} (t)=(dt( F+\sigma _{11} \rho _2 (F_2 w) \rho _1 \rho _2 (F_1 w)+\sigma _{12}\rho _2 (F_2 w)F_2 w+&\\
&\sigma _{22}\rho _1 (F_1 w)\rho _2 \rho _1 (F_2 w)
+\sigma _{21 }\rho _1 (F_1 w) F_1 w- GR^{-1}G^* \Pi _n )(t)+&\\
&dM_1(t)( F_1 w)(t) + dM_2(t)( F_2 w )(t))^* P _{n} (t)&\\
&+(dt ( P _{n} F +\sigma_{11} \rho _1 \rho _2 (P _{n} )\rho _1 (F_2 w)F_1 w+&\\
&\sigma _{12} P _{n} \rho _2 (F_2 w)F_2 w
 + \sigma _{22} \rho _2 \rho _1 (P _{n} )\rho _2(F_1 w)F_2 w+&\\
&\sigma _{21} P _{n}  \rho _1 (F_1 w)F_1 w-P _{n} GR^{-1}G^* \Pi _n )(t)&\\
&+dM_1(t)(\rho _1(P_{n} ) F_1 w)(t) + dM_2(t)(\rho _2(P _{n} ) F_2 w )(t))&\\
&+dt(\sigma _{11}\rho _1 \rho _2( w^* F_1^* )\rho _1(P  _{n})F_1 w&\\ 
&+\sigma _{12} w^* F_1^* \rho _2 (P _{n} )  F_2 w  + \sigma _{22}\rho _2 \rho _1 (w^* F_2^* )\rho _2(P _{n} )F_2 w&\\
&+\sigma_{21} w^* F_2^* \rho_1  ( P _{n}  ) F_1 w )(t) -dt(P _{n-1}  GR^{-1}G^* P _{n-1})(t) &\\
&&\\
&P _{n} (0)=0\,\,, 0 \leq t \leq T &
\end{eqnarray*}

Thus as we did for $\Pi (t)$, for all $n=2,3,...$ and  $t \in [0,T]$ 

\begin{eqnarray*}
P _{n} (t)=-\int_0^t\,ds\,K_n(t,s)( P _{n-1} GR^{-1}G^* P _{n-1} )(s)K_n(t,s)^*
\end{eqnarray*}

weakly on $D$. Since $R^{-1}(s) >0$ for all $s\in [0,T]$, this implies that 

\[
P _{n} (t) \leq 0
\]

 thus proving (4.3). By (4.3) $\{ \Pi _n (t) \} _{n=1}^{+\infty}$ converges strongly on the invariant domain $D$ and the convergence is uniform on compact $t$-intervals. Let $\Pi=\{ \Pi  (t) \} _{0 \leq t \leq T}$ denote the limit process. Being a strong limit of a decreasing sequence of adapted, strongly continuous, positive processes, $\Pi$ has the same properties. By the uniformity of the convergence of the defining sequence  $\{ \Pi _n (t) \} _{n=1}^{+\infty}$ and the arbitrariness of $T$, $\Pi$ is locally bounded. As above we can show that for $n=1,2,...$ 

\begin{eqnarray*}
&\Pi _{n+1} (t)=\Phi (t,0)\,Q_0\,\Phi (t,0)^*+\int_0^t ds\,\Phi(t,s)(Q -\Pi _n GR^{-1}G^* \Pi _n &\\
&&\\
&+ P_n GR^{-1}G^* P_n)(s)\Phi (t,s)^*&
\end{eqnarray*}

weakly on the invariant domain $D$, where $\Phi(t,s)$ is the locally bounded solution of

\begin{eqnarray*}
&d\Phi (t,s)=[dt( F+\sigma _{11} \rho _2 (F_2 w) \rho _1 \rho _2 (F_1 w)+\sigma _{12} 
\rho _2 (F_2 w)F_2 w+&\\
&\sigma _{22} 
\rho _1 (F_1 w)\rho _2 \rho _1 (F_2 w)
+\sigma _{21 }\rho _1 (F_1 w) F_1 w )(t)&\\
&+dM_1(t)( F_1 w)(t)\ + dM_2(t)( F_2 w )(t)]^* \Phi (t,s)&\\
&&\\
&\Phi (s,s)=id,\,\,s\leq t \leq T. &
\end{eqnarray*}

Let $h,\xi \in D$. By the uniformity of the convergence of $\Pi _n (t) \rightarrow \Pi (t)$ and \break $P _n (t) \rightarrow 0$, and by the local boundedness of $\Pi$, upon letting $n\rightarrow +\infty$ we obtain by the bounded convergence theorem and

\begin{eqnarray}
&\langle \Pi _{n+1} (t) h, \xi \rangle=\langle [\Phi (t,0)\,Q_0\,\Phi (t,0)^*+\int_0^t ds\,\Phi(t,s)(Q -\Pi _n GR^{-1}G^* \Pi _n& \nonumber\\
&+ P_n GR^{-1}G^* P_n)(s)\Phi (t,s)^* ],h, \xi \rangle&\nonumber
\end{eqnarray}

that

\begin{eqnarray*}
\langle \Pi  (t) h, \xi \rangle=\langle [\Phi (t,0)\,Q_0\,\Phi (t,0)^*+\int_0^t ds\,\Phi(t,s)(Q -\Pi  GR^{-1}G^* \Pi)(s)\Phi (t,s)^* ],h, \xi \rangle
\end{eqnarray*}

from which, by the arbitrariness of $h, \xi$ we have that

\begin{eqnarray*}
 \Pi  (t) =\Phi (t,0)\,Q_0\,\Phi (t,0)^*+\int_0^t ds\,\Phi(t,s)(Q -\Pi GR^{-1}G^* \Pi)(s)\Phi (t,s)^* 
\end{eqnarray*}

for all $t \in [0,T] $, weakly on $D$. By taking the differential of both sides of the above we can show that $\Pi$ solves (4.1). To see that such $\Pi$ is unique, let $\hat \Pi$ be another solution of (4.1). Letting $P(t)=\Pi(t)-\hat \Pi (t)$ we obtain

\begin{eqnarray*}
&dP (t)=[dt( F+\sigma _{11} \rho _2 (F_2 w) \rho _1 \rho _2 (F_1 w)+\sigma _{12} 
\rho _2 (F_2 w)F_2 w+&\\
&\sigma _{22}\rho _1 (F_1 w)\rho _2 \rho _1 (F_2 w)+\sigma _{21 }\rho _1 (F_1 w) 
F_1 w- GR^{-1}G^* \hat  \Pi )(t)&\\
&+dM_1(t)( F_1 w)(t) + dM_2(t)( F_2 w )(t)]^* P (t)+&\\
&[dt ( P F +\sigma_{11} \rho _1 
\rho _2 (P )\rho _1 (F_2 w)F_1 w+&\\
&\sigma _{12} P \rho _2 (F_2 w)F_2 w + \sigma _{22} \rho _2 \rho _1 (P )\rho _2(F_1 w)
F_2 w +&\\
&\sigma _{21} P  \rho _1 (F_1 w)F_1 w-P GR^{-1}G^* \hat \Pi  )(t)&\\
&+dM_1(t)(\rho _1(P ) F_1 w)(t) + dM_2(t)(\rho _2 (P) F_2 w )(t)]+&\\
&dt[\sigma _{11}\rho _1 \rho _2( w^* F_1^* )\rho _1(P )F_1 w+\sigma _{12} w^* F_1^* \rho _2 (P )  F_2 w  +&\\
& \sigma _{22}\rho _2 \rho _1 (w^* F_2^* )\rho _2(P )F_2 w+\sigma_{21} w^* F_2^* \rho_1  ( P) F_1 w ](t)&\\
& -dt(P GR^{-1}G^* P )(t)&\\
&&\\
&P (0)=0\,\,, 0\leq t \leq T .&
\end{eqnarray*}

Thus, as before, $P (t) \leq 0$ i.e $\Pi (t) \leq \hat \Pi (t)$. By interchanging $\Pi$ and $\hat \Pi$ in \break $P(t)=\Pi(t)-\hat \Pi (t)$ and replacing $\hat \Pi$ by $\Pi$ in the above equation we obtain that \break $\hat \Pi (t) \leq \Pi (t)$. Thus $\Pi(t)=\hat \Pi (t)$ which proves uniqueness.

 We now turn to the case of the Riccati equation corresponding to (3.2) and (3.6) which will be treated by using the, just proved, case corresponding to (3.1) and (3.5) and reversing the time flow. So let $s=T-t$ in the Boson version of (4.1) and let, for an operator process $K$, $\hat K (s)=K(T-s)$ to obtain

\begin{eqnarray*}
&d\hat \Pi (s)=ds((\hat F+\tilde \sigma _{11} \rho _2 (\hat F_2 \hat w) \rho _1 \rho _2 (\hat F_1 \hat w)+\tilde \sigma _{12} 
\rho _2 (\hat F_2 \hat w)\hat F_2 \hat w+&\\
&\tilde \sigma _{22} 
\rho _1 (\hat F_1 \hat w) \rho _2 \rho _1 (\hat F_2 \hat w)
+\tilde \sigma _{21 }\rho _1 (\hat F_1 \hat w) \hat F_1 \hat w)^* \hat \Pi + &\\
&( \hat \Pi \hat F +\tilde \sigma_{11} 
 \rho _1 \rho _2 (\hat \Pi )\rho _1 (\hat F_2 \hat w)\hat F_1 \hat w+&\\
&\tilde \sigma _{12} \hat \Pi \rho _2 (\hat F_2 \hat w)\hat F_2 \hat w
+\tilde \sigma _{22} \rho _2 \rho _1 (\hat \Pi )\rho _2(\hat F_1 \hat w)\hat F_2 \hat w+&\\
&\tilde \sigma _{21} \hat \Pi  \rho _1 (\hat F_1 \hat w) \hat F_1 w))+\tilde \sigma _{11} \rho _1 \rho _2( {\hat w}^*  {\hat F}_1^* )\rho _1(\hat \Pi ) \hat F_1 \hat w &\\
&+\tilde  \sigma _{12} {\hat w}^* {\hat F}_1^* \rho _2 (\hat \Pi )\hat  F_2
 \hat  w  +\tilde \sigma _{22} 
\rho _2 \rho _1 ({\hat w}^* {\hat F}_2^* )\rho _2(\hat \Pi )\hat F_2 \hat w&\\
&+\tilde \sigma_{21}{\hat w}^* {\hat F}_2^* \rho_1  (\hat \Pi)\hat F_1 \hat w +\hat Q -\hat \Pi \hat G {\hat R}^{-1} {\hat G}^* \hat \Pi )(s) &\\
&+  dN_1(s)(\rho _1 ({\hat w}^* {\hat F}_2^* )\hat \Pi +  \rho _1 (\hat \Pi) \hat F_1 \hat w)(s)& \\
&+  dN_2(s)(\rho _2({\hat w}^* {\hat F}_1^* )\hat \Pi+  \rho _2 (\hat \Pi )\hat F_2 \hat w)(s)& \\
&&\\
&\hat \Pi (0)=Q_T\,\,, 0\leq s \leq T& 
\end{eqnarray*}

where the Levy-pair $(N_1,N_2)$ is defined by 

\begin{eqnarray*}
N_1(s)=-M_1(T-s),\,\,\,N_2(s)=-M_2(T-s)
\end{eqnarray*}

with corresponding It\^{o} table

 \begin{eqnarray*}
dN_b^* (s)\,dN_a(s)=\tilde {\sigma}_{ba} (s)\,ds
\end{eqnarray*}

where $a,b \in \{1,2\}$ and
 
\begin{eqnarray*}
\tilde {\sigma}_{ba} (s)=-\sigma_{ba} (T-s).
\end{eqnarray*}

Since the above differential equation is of the same form as the equation studied in the first part of this proof, the proof is complete.
\end{proof}

\medskip
 \section{\textbf{  Control of quantum flows}}
\medskip

To illustrate the use of the results of the previous section in the control of quantum flows, we consider a quantum flow $\{j_t(X)/\,t\in [0,T]\}$ of bounded linear operators on  $H_0\otimes \Gamma$ defined by $j_t(X)=U_t^*\,X \,U_t$ where, following \cite{o}, $H_0$ is a  separable Hilbert space, $\Gamma$ is the Boson Fock space over $ L^2 ([0,+\infty ))$, $X$ is a self-adjoint operator on $H_0$ identified with its ampliation $X \otimes I$ to $H_0\otimes \Gamma$, and $U=\{U_t \, / \, t \geq 0 \}$ is a unitary process satisfying on  $H_0 \otimes \Gamma$ a quantum stochastic differential equation of the form

\begin{equation}
dU_t=-((iH+\frac{1}{2}\,L^*L)\,dt+ L^* dA_t -L\, dA_t^{\dagger})\,U_t,\,t\in [0,T]
\end{equation}

 with adjoint

\begin{equation}
dU_t^*=-U_t^*\,((-iH+\frac{1}{2}\,L^*L)\,dt- L^* dA_t +L\, dA_t^{\dagger}),\,t \in [0,T]
\end{equation}

and

\[
U_0=U_0^*=I
\]

where $H,\,L$ are bounded operators on $H_0$ with $H$ self-adjoint. The  "annihilation" and "creation" processes $A=\{A_t \,/\, t \geq 0\}$ and  $A^{\dagger}=\{A_t^{\dagger}\,/\, t \geq 0\}$ that drive the above equations are an example of a Boson Levy pair of the type described in Section 4.

Using quantum It\^{o}'s formula for first order white noise, namely $dA_t\,dA_t^{\dagger}=dt$ and all other products of differentials are equal to zero, we can show  that the flow $\{j_t(X)/\,t\in [0,T]\}$ satisfies the  quantum stochastic differential equation 

\begin{eqnarray*}
&dj_t(X)=j_t(i[H,X]-\frac{1}{2}(L^*LX+XL^*L-2L^*XL))\,dt &\\
&+j_t([L^*,X])\,dA_t +j_t([X,L])\,dA_t^{\dagger}&
\end{eqnarray*}

with

\[
j_0(X)=X,\,t\in [0,T]
\]

\begin{definition}
For  any  finite time interval $[0,T]$ and   any vector $\xi$ in the exponential domain of $H_0\otimes \Gamma$ 

 \begin{equation}
\hat J_{\xi,T}(L)=\int_0^T\,[\,\|j_t(X)\xi\|^2+\frac{1}{4}\|j_t(L^*L)\xi\|^2\, ]\,dt+\frac{1}{2}\|j_T(L)\xi\|^2
\end{equation}

where $L$ is as in (5.1).

\end{definition}

 Thinking of $L$ as a control, as pointed out in the introduction we interpret the first term of the right hand side of (5.3) as a measure of the size of the flow over $[0,T]$, the second as a measure of the control effort over $[0,T]$ and the third as a "penalty" for allowing the evolution to go on for a long time. We consider the problem of controlling the size of such a flow by minimizing the performance functional $\hat J_{\xi,T}(L)$ of (5.3). 

\begin{theorem}
  Let $U=\{U_t\,/\,t\geq 0\}$ be an adapted process satisfying the  quantum stochastic differential equation 

\begin{equation}
dU_t=(F_tU_t+u_t)\,dt+ \Psi_t \,U_t\, dA_t+ \Phi_t \,U_t\, dA_t^{\dagger},\,U_0=I,\,t\in [0,T]
\end{equation}

where $T > 0$ is a fixed finite horizon, $dA_t^{\dagger}$ and $dA_t$ are the differentials of the creation and annihilation processes  of \cite{o},  and the coefficient processes are adapted, bounded, strongly continuous and square integrable processes living on the exponential  domain $\mathcal{E}=$ span $\{\xi=\xi_0 \otimes \psi (f)\}$ of the tensor product $H_0\otimes \Gamma$ of a system (separable Hilbert) space $H_0$ and the noise (Boson  Fock) space $\Gamma$ on $L^2([0,T],\mathbb{C})$. 

The quadratic performance functional

\begin{equation}
\tilde J_{\xi,T}(u)=\int_0^T\,[<U_t \xi,X^*X\,U_t \xi>+<u_t \xi,u_t \xi>]\,dt+<U_T \xi,M\,U_T \xi>
\end{equation}

where $X,\,M$ are  bounded operators on $H_0$, identified with their ampliations \break 
$X \otimes I,\,M \otimes I$ to $H_0 \otimes \Gamma$, with $M \geq 0$,  is minimized by the feedback control process $u_t=-\Pi_tU_t$, where the bounded, positive, self-adjoint process $\{\Pi_t\,/\,t\in [0,T]\}$, with $\Pi_T=M$, is the solution of the quantum stochastic Riccati equation

\begin{eqnarray}
&d\Pi_t+(\Pi_tF_t+F^*_t\Pi_t+\Phi_t^*\Pi_t\Phi_t-\Pi_t^2+X^*X )\,dt+&\\
& (\Pi_t\Psi_t+\Phi_t^*\Pi_t)\, dA_t+(\Pi_t\Phi_t+\Psi_t^*\Pi_t)\, dA_t^{\dagger}=0 &\nonumber
\end{eqnarray}

 and the minimum value is $<\xi,\Pi_0 \xi>$.

\end{theorem}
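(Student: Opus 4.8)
The plan is to treat the statement as the concrete Boson specialization of Theorems 1 and 2 and, in parallel, to give a direct completion-of-squares verification. Under the dictionary $d\tau=dt$, state process $U$, $G=R=id$, $Q=X^*X$, $Q_T=M$, $m=\eta=m_T=0$, $L=z=0$, $w=id$, with the noise taken to be the creation/annihilation Levy pair and $F_1=\Psi$ (the $dA$ coefficient), $F_2=\Phi$ (the $dA^\dagger$ coefficient), equation (5.4) is of the form (3.1) and the functional (5.5) of the form (3.5). Existence, uniqueness, positivity, self-adjointness and local boundedness of the solution $\Pi$ of (5.6) then follow from Theorem 2, since (5.6) is the reduction of (4.1) to this pair (only $\sigma_{22}=id$ survives). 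Moreover the auxiliary process $r$ of Theorem 1 here solves a homogeneous linear equation with zero terminal value and zero forcing (because $m=\eta=m_T=0$ and $L=z=0$), so $r\equiv 0$, and the feedback law $u=-R^{-1}(G^*(\Pi U+r)+\eta^*)$ of Theorem 1 collapses to $u_t=-\Pi_tU_t$, the announced minimizer.

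For the value identity I would introduce the scalar process $V_t=\langle U_t\xi,\Pi_tU_t\xi\rangle$ and compute $dV_t$ by applying the Ito product formula (2.6) to the triple product $U_t^*\Pi_tU_t$, using the first order white noise table $dA_t\,dA_t^\dagger=dt$ with all other products of differentials zero. Writing $dU_t=(F_tU_t+u_t)\,dt+\Psi_tU_t\,dA_t+\Phi_tU_t\,dA_t^\dagger$ together with the adjoint $dU_t^*$, the differential $dV_t$ splits into a $dt$ part and $dA_t,dA_t^\dagger$ parts. First I would check that the operator coefficients of $dA_t$ and of $dA_t^\dagger$ in $dV_t$ vanish identically: this is exactly what the stochastic part of (5.6), with coefficients $\Pi\Psi+\Phi^*\Pi$ and $\Pi\Phi+\Psi^*\Pi$, is engineered to achieve. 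Hence $V_t$ is genuinely differentiable in $t$, with a pure $dt$ differential.

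It then remains to identify the $dt$ part. Substituting the drift of (5.6), the $X^*X$ contribution cancels the state cost $\langle U_t\xi,X^*XU_t\xi\rangle$ in (5.5), while the terms $F^*\Pi+\Pi F$ cancel the drift contributions of $F_tU_t$; adding the control cost $\|u_t\xi\|^2$, the surviving terms $2\,\Re\langle u_t\xi,\Pi_tU_t\xi\rangle+\|\Pi_tU_t\xi\|^2+\|u_t\xi\|^2$ assemble, since $\Pi_t=\Pi_t^*$, into the perfect square $\|(u_t+\Pi_tU_t)\xi\|^2$. Integrating $dV_t$ over $[0,T]$ and using $\Pi_T=M$ and $U_0=id$ yields
\[
\tilde J_{\xi,T}(u)=\langle\xi,\Pi_0\xi\rangle+\int_0^T\|(u_t+\Pi_tU_t)\xi\|^2\,dt .
\]
Since the integrand is nonnegative, $\tilde J_{\xi,T}(u)\ge\langle\xi,\Pi_0\xi\rangle$, with equality if and only if $u_t\xi=-\Pi_tU_t\xi$ for a.e. $t$, i.e. at $u_t=-\Pi_tU_t$, where the minimum value $\langle\xi,\Pi_0\xi\rangle$ is attained.

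The hard part will be the bookkeeping of the Ito correction terms produced by the three stochastic factors in $U_t^*\Pi_tU_t$. Besides the expected $\Phi^*\Pi\Phi$ coming from $dU_t^*\,\Pi_t\,dU_t$, the cross products $dU_t^*\,d\Pi_t\,U_t$ and $U_t^*\,d\Pi_t\,dU_t$ generate mixed terms in $\Psi_t,\Phi_t,\Pi_t$ (e.g. $\Phi^*\Psi^*\Pi$ and $\Pi\Psi\Phi$) that must be matched and cancelled by the corresponding drift terms of the Riccati equation. Verifying that the drift of (5.6) — equivalently the Boson reduction of (4.1) — is exactly tuned so that only the perfect square survives is the delicate point; one must track operator ordering carefully, remember that the non-commutativity of $\Psi_t,\Phi_t,\Pi_t$ forbids naive cancellations, and exploit that $dA_t^\dagger\,dA_t=0$ while $dA_t\,dA_t^\dagger=dt$ breaks the symmetry between the two noise directions. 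It is precisely here that the exact form of the drift term (the sign of $\Phi^*\Pi\Phi$ and the presence of the mixed $\Psi\Phi$ contributions) has to be pinned down against the quadratic variation, and I would double-check (5.6) against the reduction of (4.1) at this step.
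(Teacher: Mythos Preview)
Your first route---the dictionary identifying (5.4) with (3.1) and (5.5) with (3.5), the observation that $r\equiv 0$ so the feedback law of Theorem~1 collapses to $u_t=-\Pi_tU_t$, and the appeal to Theorem~2 for the existence/positivity of $\Pi$---is exactly the paper's proof, which is a one-line invocation of Theorem~1. Your second route, the completion-of-squares via $V_t=\langle U_t\xi,\Pi_tU_t\xi\rangle$, is a genuinely different and more informative argument: Theorem~1 as stated gives only the optimizer, not the optimal value, so the identity $\min\tilde J_{\xi,T}=\langle\xi,\Pi_0\xi\rangle$ asserted in Theorem~3 is not actually covered by the paper's one-line reduction. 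Your direct calculation fills that gap and also shows, via the nonnegative integrand $\|(u_t+\Pi_tU_t)\xi\|^2$, that the optimum is a genuine minimum for every admissible $u$, not merely a stationary point.

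Your caution about the It\^o bookkeeping in the triple product $U_t^*\Pi_tU_t$ is well placed and is the real content of the verification. Carrying it out, the cross corrections $dU^*\,d\Pi\cdot U$ and $U^*\cdot d\Pi\,dU$ contribute $-\Phi^*(\Psi^*\Pi+\Pi\Phi)-(\Phi^*\Pi+\Pi\Psi)\Phi$ in addition to the $\Phi^*\Pi\Phi$ from $dU^*\,\Pi\,dU$; for the perfect square $\|(u+\Pi U)\xi\|^2$ to emerge, the $dt$-drift of the Riccati equation must contain the mixed pieces $\Phi^*\Psi^*\Pi+\Pi\Psi\Phi$ (and the $\Phi^*\Pi\Phi$ term with the opposite sign to the one printed in (5.6)). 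This is precisely what the Boson reduction of (4.1) with only $\sigma_{22}\ne 0$ gives, so your proposed cross-check of (5.6) against (4.1) is not optional---it is where you will see that (5.6) as displayed omits those mixed terms. Once you use the drift dictated by (4.1), the cancellation you anticipate goes through cleanly and your completion-of-squares identity holds.
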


\begin{proof} The proof follows by a direct translation of the results of Theorem 1 related to (3.1) and (3.5) in the framework of (5.1) and (5.2) 

\end{proof}

\begin{lemma}In the notation of Definitions 5, 1, and Theorem 3, 
 if  $X$  is self-adjoint, $M=\frac{1}{2}L^*L$,  and $u_t=-\frac{1}{2}L^*L\,U(t)$  for all $t \geq 0$, where $U=\{U(t)\,/\,t \geq 0 \}$ is the unitary solution of (5.1),  then

\[
\hat J_{\xi,T}(L)=J_{\xi,T}(u)=\tilde J_{\xi,T}(u)
\]

\end{lemma}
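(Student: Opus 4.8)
The plan is to prove the identity by showing that each of the three functionals, once the stated substitutions are made, has the same integrand and the same terminal term, so that the equality holds term by term. First I would observe that this is not a theorem requiring any analytic machinery (no limits, no Riccati equations, no approximation arguments): it is an algebraic book-keeping identity that verifies the three performance functionals introduced in Definitions~5, 1, and in the body of Theorem~3 are consistent under the dictionary $M=\frac{1}{2}L^*L$, $u_t=-\frac{1}{2}L^*L\,U(t)$. The strategy is therefore to expand all three expressions and match them piece by piece.

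\smallskip

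I would proceed in three comparisons. First, to compare $\hat J_{\xi,T}(L)$ of (5.3) with $J_{\xi,T}(u)$ of (1.1): since $X$ is self-adjoint, the term $\|j_t(X)\xi\|^2$ is common to both, so the content is in the remaining terms. For the running cost I would use $j_t(L^*L)=U_t^*\,L^*L\,U_t$, whence $\|j_t(L^*L)\xi\|^2=\langle \xi, U_t^*\,(L^*L)^2\,U_t\,\xi\rangle$; and with $u_t=-\frac{1}{2}L^*L\,U_t$ I would compute $\|u_t\xi\|^2=\frac{1}{4}\langle\xi,U_t^*(L^*L)^2 U_t\xi\rangle$, matching the factor $\frac{1}{4}$ in (5.3). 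For the terminal cost I would note $\|j_T(L)\xi\|^2=\langle\xi,j_T(L^*L)\xi\rangle=\langle\xi,U_T^*L^*L\,U_T\xi\rangle$, so $\frac{1}{2}\|j_T(L)\xi\|^2=\langle\xi,j_T(\frac{1}{2}L^*L)\xi\rangle=\langle\xi,j_T(M)\xi\rangle$, which is exactly the terminal term of $J_{\xi,T}(u)$ under $M=\frac{1}{2}L^*L$. This establishes $\hat J_{\xi,T}(L)=J_{\xi,T}(u)$.

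\smallskip

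Second, to compare $J_{\xi,T}(u)$ with $\tilde J_{\xi,T}(u)$ of (5.5): here I would identify $U_t$ in (5.5) with the unitary solution of (5.1), so that $X(t)=U_t$ in the sense of Theorem~3, and use $\|j_t(X)\xi\|^2=\langle \xi, U_t^*X^*X\,U_t\xi\rangle=\langle U_t\xi, X^*X\,U_t\xi\rangle$, which is precisely the first integrand of (5.5). The control terms match directly as $\|u_t\xi\|^2=\langle u_t\xi,u_t\xi\rangle$, and the terminal terms match because $\langle\xi,j_T(M)\xi\rangle=\langle U_T\xi,M\,U_T\xi\rangle$ with the same $M$. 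The main (and only) subtlety, which I would flag explicitly, is the bridging between the unitary dynamics (5.1) and the general-coefficient dynamics (5.4): one must check that (5.1) is the special case of (5.4) obtained by setting $F_t=-(iH+\frac{1}{2}L^*L)$, $\Psi_t=-L^*$, $\Phi_t=L$, and $u_t=0$ in the \emph{uncontrolled} flow, and that the control $u_t=-\frac{1}{2}L^*L\,U_t$ then sits exactly as the additive term in the controlled equation. Once these identifications are in place the three functionals coincide identically, and no inequality or optimization is invoked — the lemma is purely the statement that the abstract cost of Definition~1 and the concrete Fock-space cost of Theorem~3 reduce to the flow-size functional of Definition~5 under the given choices.
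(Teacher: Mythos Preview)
Your approach is essentially the paper's: both arguments verify the three functionals coincide term by term, the only computational content being the insertion of $U_tU_t^*=I$ (unitarity) to pass between $\langle\xi,U_t^*A^2U_t\xi\rangle$ and $\|j_t(A)\xi\|^2$; the paper runs the chain $\tilde J\to J\to\hat J$ while you run it in the reverse order, which is immaterial. Two small remarks: you should state the unitarity step explicitly, since it is the one nontrivial ingredient; and your closing paragraph on identifying (5.1) with (5.4) is both unnecessary for the lemma and slightly off --- the correct dictionary (used later in Theorem~4) is $F_t=-iH$, $\Psi_t=-L^*$, $\Phi_t=L$ with $u_t=-\tfrac12 L^*L\,U_t$, not $F_t=-(iH+\tfrac12 L^*L)$ with $u_t=0$, as your version would double-count the $\tfrac12 L^*L$ term once the control is added.
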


\begin{proof} By (5.5), for self-adjoint $X$

\begin{eqnarray*}
&\tilde J_{\xi,T}(u)=&\\
&\int_0^T \,[< \xi,U_t^*X X U_t \xi>+\|u_t \xi\|^2]\,dt+< \xi,U_T^*M U_T \xi>&\\
&=\int_0^T\,[< \xi,U_t^*X U_t U_t^*  X U_t \xi>+\|u_t \xi\|^2]\,dt+<\xi,j_T(M)\xi>&\\
& =\int_0^T\,[\|j_t(X)\xi\|^2+\|u_t \xi\|^2]\,dt+<\xi,j_T(M)\xi> &\\
&=J_{\xi,T}(u)&\\
&=\int_0^T\,[\|j_t(X)\xi\|^2+\frac{1}{4}\|L^*LU_t \xi\|^2]\,dt+\frac{1}{2}< \xi,U_T^*L^*LU_T \xi>   &\\
&=\int_0^T\,[\|j_t(X)\xi\|^2+\frac{1}{4}< \xi,U_t^*(L^*L)^2U_t \xi>]\,dt+\frac{1}{2}< \xi,U_T^*L^*LU_T \xi>&\\
&=\int_0^T\,[\|j_t(X)\xi\|^2+\frac{1}{4}< \xi,U_t^*(L^*L)U_tU_t^* (L^*L)U_t \xi>]\,dt+\frac{1}{2}< \xi,U_T^*L^*LU_T \xi>&\\
&=\int_0^T\,[\|j_t(X)\xi\|^2+\frac{1}{4}\|j_t(L^*L)\xi\|^2]\,dt+\frac{1}{2}<\xi,j_T(L^*L)\xi>&\\
&=\hat J_{\xi,T}(L)&
\end{eqnarray*}

\end{proof}

\begin{theorem}

 Let $\xi$ be a vector in the exponential vectors domain $\mathcal{E}$ of $H_0 \otimes \Gamma$, let $0<T<+\infty$, and let $H$, $L$, $X$ be bounded linear operators on $H_0$ such that $H$, $X$  are self-adjoint and the pair ($\frac{i}{2}H$, $X $) is stabilizable. The quadratic performance criterion $\hat J_{\xi,T}(L)$ of Definition 5, associated with the quantum stochastic flow $\{j_t(X)=U_t^*\,X \,U_t\,/\,t \geq 0\}$  where  $U=\{U_t\,/\,t\geq 0\}$ is the solution of (5.1), is minimized by

\begin{equation}
L=\sqrt{2}\,\Pi_{\infty}^{1/2}\,W\,\,\,(\mbox{polar decomposition of}\,\,L)  
\end{equation}

where  $\Pi_{\infty}$ is a positive self-adjoint solution of the "algebraic Riccati equation"

\begin{eqnarray}
\frac{i}{2}[H,\Pi_{\infty}]+\frac{1}{4}\Pi_{\infty}^2+X^2=0  
\end{eqnarray}

 and $W$ is any bounded unitary linear operator on $H_0$ commuting with  $\Pi_{\infty}$.

Moreover

\begin{eqnarray}
\min_{L}\,J_{\xi,T}(L)= <\xi,\Pi_{\infty}\xi>
\end{eqnarray}

 independent of $T$.

\end{theorem}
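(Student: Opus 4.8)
The plan is to reduce the flow-control problem to the operator linear-quadratic problem already solved in Theorem 3 and then to pass to a stationary (time-independent) regime. First I would invoke Lemma 1: with $X$ self-adjoint, $M=\frac{1}{2}L^*L$ and the feedback $u_t=-\frac{1}{2}L^*L\,U_t$, it identifies $\hat J_{\xi,T}(L)=\tilde J_{\xi,T}(u)$, so that the cost attached to the flow is literally the cost (5.5) of the process $U$ governed by (5.1). Matching (5.1) to the template (5.4) gives $F_t=-iH$, $\Psi_t=-L^*$, $\Phi_t=L$. By Theorem 3 the cost (5.5) is minimized over additive controls by $u_t=-\Pi_t U_t$, where $\Pi_t$ solves the stochastic Riccati equation (5.6); substituting these coefficients and the Boson It\^o rule $dA_t\,dA_t^{\dagger}=dt$ collapses (5.6) to
\[
d\Pi_t+(i[H,\Pi_t]+L^*\Pi_t L-\Pi_t^{2}+X^{2})\,dt+[L^*,\Pi_t]\,dA_t+[\Pi_t,L]\,dA_t^{\dagger}=0,\qquad \Pi_T=\tfrac{1}{2}L^*L .
\]

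Second, I would seek the $L$ that makes the feedback supplied by Lemma 1 coincide with the optimal feedback of Theorem 3; that is, I impose $\frac{1}{2}L^*L=\Pi_t$ for all $t$, which forces $\Pi_t$ to be the constant process $\Pi_t\equiv\Pi_\infty:=\frac{1}{2}L^*L$. Constancy kills $d\Pi_t$, so both noise coefficients must vanish, giving the commutation condition $[\Pi_\infty,L]=[\Pi_\infty,L^*]=0$; the remaining drift-vanishing condition, using this commutation to replace $L^*\Pi_\infty L$ by $\Pi_\infty L^*L$ together with $L^*L=2\Pi_\infty$, then reduces the stochastic equation to the purely algebraic equation (5.8) on the system space $H_0$. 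The two requirements, namely that $\Pi_\infty$ commute with $L$ and that $L^*L=2\Pi_\infty$ (equivalently $\Pi_\infty=\frac{1}{2}L^*L$), are met exactly by the polar form $L=\sqrt{2}\,\Pi_\infty^{1/2}W$ with $W$ unitary and $[W,\Pi_\infty]=0$: then $L^*L=LL^*=2\Pi_\infty$, so $L$ is normal and commutes with $\Pi_\infty$. Crucially the terminal condition $\Pi_T=\frac{1}{2}L^*L=\Pi_\infty$ is satisfied automatically by the constant solution, so the same $\Pi_\infty$ solves the Riccati problem for every horizon $T$; this is the source of the $T$-independence asserted in (5.9).

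Third, existence of a positive self-adjoint $\Pi_\infty$ solving (5.8) is exactly where the stabilizability of $(\frac{i}{2}H,X)$ enters: it is the operator analogue of the hypothesis guaranteeing a stabilizing solution of the infinite-horizon algebraic Riccati equation, and one obtains $\Pi_\infty$ either directly or as the stationary (hence horizon-independent) limit of the finite-horizon solutions $\Pi_t$. Once $\Pi_\infty$ is in hand, the value follows from Theorem 3, which gives $\min_u\tilde J_{\xi,T}(u)=\langle\xi,\Pi_0\xi\rangle=\langle\xi,\Pi_\infty\xi\rangle$; combining this with the chain $\hat J=J=\tilde J$ of Lemma 1 yields (5.9).

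I expect the genuine difficulties to be twofold. The first is the existence and positivity of the operator solution $\Pi_\infty$ of (5.8) when $H_0$ is infinite-dimensional: unlike the matrix case this is not automatic and is precisely what the stabilizability assumption must buy, and some care is needed to check that the solution is bounded and self-adjoint and that the polar factor $\Pi_\infty^{1/2}$ with a commuting unitary $W$ indeed produces an admissible bounded coefficient $L$ for (5.1). The second, and more conceptual, difficulty is justifying that passing to the self-consistent stationary choice delivers the global minimum over $L$ rather than merely a critical point: because the control $L$ enters the dynamics multiplicatively (through $\Psi=-L^*$, $\Phi=L$) and the running cost quartically (through $\frac{1}{4}\|j_t(L^*L)\xi\|^2$, i.e.\ $\frac{1}{4}j_t((L^*L)^2)$), a direct completion of squares in $L$ is unavailable, so one must argue through the $u$-linear problem of Theorem 3 and verify that the lower bound $\langle\xi,\Pi_\infty\xi\rangle$ produced by the constant Riccati solution is attained at the polar-decomposition $L$ and is not undercut by a non-stationary competitor.
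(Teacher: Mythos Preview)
Your proposal follows essentially the same route as the paper: invoke Lemma~1 to identify $\hat J_{\xi,T}(L)$ with $\tilde J_{\xi,T}(u)$ for $u_t=-\tfrac{1}{2}L^*LU_t$ and $M=\tfrac{1}{2}L^*L$; match (5.1) to (5.4) with $F_t=-iH$, $\Psi_t=-L^*$, $\Phi_t=L$; impose the self-consistency $\tfrac{1}{2}L^*L=\Pi_t$, which forces $\Pi_t\equiv\Pi_\infty$ constant; use linear independence of $dt,\,dA_t,\,dA_t^{\dagger}$ to split (5.6) into the normality/commutation condition on $L$ and the algebraic Riccati equation; and appeal to stabilizability of $(\tfrac{i}{2}H,X)$ for existence of a positive self-adjoint $\Pi_\infty$, whence the polar form (5.7) and the value (5.9) via Theorem~3. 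The only differences are cosmetic: the paper records the noise-vanishing condition directly as $[L,L^*]=0$, whereas you first obtain $[\Pi_\infty,L]=0$ and then recover normality from the polar construction; and the paper does not explicitly flag the global-minimum-versus-critical-point issue you raise in your final paragraph---it simply asserts optimality via Theorem~3 without further justification, so your concern is legitimate but not something the paper itself resolves.
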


 \begin{proof}Looking at (5.1) as (5.4) with $u_t=-\frac{1}{2}L^*LU_t$ and taking $M=\frac{1}{2}L^*L$, in view of Lemma 1 (5.5) yields  

\begin{eqnarray*}
\tilde J_{\xi,T}(u)=\hat J_{h,T}(L)=\int_0^T\,[\,\|j_t(X)h\|^2+\frac{1}{4}\|j_t(L^*L)h\|^2\, ]\,dt+\frac{1}{2}\|j_T(L)h\|^2
\end{eqnarray*}

 By Theorem 3, in order for $L$ to be optimal it must satisfy

\begin{eqnarray*}
\frac{1}{2}L^*L=\Pi_t 
\end{eqnarray*}

where $\Pi_t$ is the solution of (5.6) for $F_t=-iH$, $\Phi_t=L$ and $\Psi_t=-L^*$. For these choices (5.6) reduces, by the time independence of $\Pi_t$ and the linear independence of $dt$, $dA_t$ and $dA_t^{\dagger}$ , to the equations

\begin{equation}
[L,L^*]=0\,\,(\mbox{i.e $L$ is normal})  
\end{equation}

and

\begin{eqnarray}
\frac{i}{2}[H,\Pi_{\infty}]+\frac{1}{4}\Pi_{\infty}^2+X^2=0  
\end{eqnarray}

where $[ \cdot, \cdot ]$ denotes the usual operator commutator and

\begin{equation}
\Pi_{\infty}=\frac{1}{2}L^*L
\end{equation}

 We recognize (5.11) as a special case of the algebraic Riccati equation (ARE) (see \cite{h}). It is known that if there exists a bounded linear operator $K$ on $H_0$ such that $\frac{i}{2}H+KX$ is the generator of an asymptotically stable semigroup (i.e if the pair ($\frac{i}{2}H$, $X
$) is stabilizable) then (5.11) has a positive self-adjoint solution $\Pi_{\infty}$.  Now (5.7) follows by (5.10) and (5.12) in conjunction with Lemma 1. 

\end{proof}

\end{document}